\newtheorem{theorem}{Theorem}
\newtheorem{lemma}{Lemma}
\newtheorem{corollary}{Corollary}
\newtheorem{remark}{Remark}
\newtheorem{example}{Example}
\begin{document}

\vspace*{3cm} \thispagestyle{empty}
\vspace{5mm}

\noindent \textbf{\Large Velocity addition formulas in Robertson-Walker spacetimes}\\

\textbf{\normalsize  \textbf{\normalsize David Klein and Jake Reschke}\footnote{Department of Mathematics and Interdisciplinary Research Institute for the Sciences, California State University, Northridge, Northridge, CA 91330-8313. Email: david.klein@csun.edu, jake.reschke.244@my.csun.edu}}
\\

\vspace{4mm} \parbox{11cm}{\noindent{\small Universal velocity addition formulas analogous to the well-known formula in special relativity are found for four geometrically defined relative velocities in a large class of Robertson-Walker spacetimes. Explicit examples are given.  The special relativity result is recovered as a special case, and it is shown that the spectroscopic relative velocity, in contrast to three other geometric relative velocities, follows the same addition law as in special relativity for comoving observers in Robertson-Walker cosmologies. }\\

\noindent {\small KEY WORDS: Robertson-Walker cosmology, geometric relative velocity, velocity addition formula}\\

\noindent Mathematics Subject Classification: 83F05, 83C10, 83A05}\\
\vspace{6cm}
\pagebreak

\setlength{\textwidth}{27pc}
\setlength{\textheight}{43pc}

\section{Introduction}
In special relativity, the velocity addition formula for three inertial observers whose relative motions are spatially collinear may be expressed as,

\begin{equation}\label{special}
v_{3}=\frac{v_{1}+v_{2}}{1+v_{1}v_{2}}.
\end{equation}
Here and below, relative to a central observer, $v_{1}$ is the velocity of a secondary observer; $v_{2}$ is the velocity of a test particle relative to this secondary observer; and $v_{3}$ is the velocity of the test particle relative to the central observer.  An alternative and sometimes useful equivalent relationship is,

\begin{equation}\label{special'}
\frac{1-v_{3}}{1+v_{3}}=\left(\frac{1-v_{1}}{1+v_{1}}\right)\left(\frac{1-v_{2}}{1+v_{2}}\right).
\end{equation}

\noindent Are meaningful velocity addition formulas, analogous to these, possible in general relativity?  At first glance, the answer appears to be no. General relativity provides no \textit{a priori} definition of relative velocity when the test particles and observers are located at different space-time points.  Different coordinate charts give rise to different conceptions of relative velocity.\\

\noindent   To avoid such ambiguities, we study the coordinate independent, purely geometrically defined Fermi, kinematic, spectroscopic, and astrometric, relative velocities introduced by V. J. Bol\'os in \cite{bolos}, and subsequently developed in  \cite{KC10}, \cite{randles}, \cite{Bolos12}, \cite{sam}, \cite{klein13}, \cite{Bolos12b}, \cite{Bolos13}, \cite{Bolos14}. The definitions of these geometric relative velocities depend on two distinct notions of simultaneity: ``spacelike simultaneity''  (or ``Fermi simultaneity'') and ``lightlike simultaneity.''\\  

\noindent The Fermi and kinematic relative velocities depend on spacelike simultaneity.  Two events are simultaneous in this sense if they lie on the same Fermi space slice $\mathcal{M}_{\tau}$ determined by a fixed Fermi time coordinate $\tau$.  To define Fermi coordinates (see, e.g., \cite{KC1}), consider a foliation of some neighborhood $\mathcal{U}$ (which might be the entire spacetime) of a central observer's geodesic worldline, $\beta_{0}(t)$, by disjoint Fermi spaceslices $\{\mathcal{M}_{\tau}\}$ defined by,

\begin{equation}\label{slice2}
\mathcal{M}_{\tau}\equiv \varphi_{\tau}^{-1}(0).
\end{equation}
Here the function $\varphi_{\tau}: \mathcal{U} \rightarrow \mathbb{R}$ is given by,
\begin{equation}\label{slice}
\varphi_{\tau}(p)=g(\exp_{\beta(\tau)}^{-1}p,\, \dot{\beta}_{0}(\tau)),
\end{equation}
where the overdot represents differentiation with respect to proper time $\tau$ along $\beta_{0}$, $g$ is the metric tensor, and the exponential map, $\exp_{p}(v)$ denotes the evaluation at affine parameter $1$ of the geodesic starting at  point $p\in\mathcal{M}$, with initial derivative $v$. In other words, the Fermi spaceslice $\mathcal{M}_{\tau}$ of all $\tau$-simultaneous points consists of all the spacelike geodesics orthogonal to the path of the central observer $\beta_{0}$ at fixed proper time $\tau$. A convenient coordinate system on $\mathcal{M}_{\tau}$ consists of two angular coordinates together with the proper distance $\rho$ from the central observer's path to a point on $\mathcal{M}_{\tau}$ (see \cite{randles, klein13}). \\ 

\noindent The Fermi relative velocity, $V_{\mathrm{Fermi}}$, of a test particle relative to an observer, $\beta_{0}(\tau)$, is a vector field along $\beta_{0}(\tau)$, orthogonal to the 4-velocity $U$ of the observer (and therefore tangent to $\mathcal{M}_{\tau}$) at each proper time $\tau$ of the observer.  For a test particle undergoing purely radial motion (which is our focus), the magnitude of $V_{\mathrm{Fermi}}$, or the Fermi relative speed, $v_{\mathrm{Fermi}}$, is the rate of change of proper distance $\rho$ of the test particle away from the  observer $\beta(\tau)$, with respect to proper time $\tau$.   Eq. \eqref{special} is thus a statement about Fermi relative velocities in the context of special relativity.  For more general motion (i.e. non radial), the definition of Fermi relative velocity may be found in \cite{bolos}.\\

\noindent The kinematic velocity of a test particle at a spacetime point $q$, relative to the central observer $\beta_{0}(\tau)$, is found by first parallel transporting the test particle's 4-velocity $u'$ along a radial spacelike geodesic (lying on a Fermi space slice) to the 4-velocity denoted by $\tau_{q p}u'$ in the tangent space of the central observer at spacetime point $p=\beta_{0}(\tau_{0})$. If the 4-velocity of $\beta_{0}$ at the point $p$ is $u$, then the kinematic relative velocity $v_{\mathrm{kin}}$ of the test particle is defined to be the unique vector orthogonal to $u$, in the tangent space at $p$, satisfying $\tau_{q p}u'=\gamma (u+v_{\mathrm{kin}})$, where the scalar $\gamma$ is uniquely determined and is given by $\gamma=1/\sqrt{1-v_{\mathrm{kin}}^{2}}$.\\

\noindent The spectroscopic (or barycentric) and astrometric relative velocities can be found, in principle, from spectroscopic and astronomical observations. These two relative velocities depend on lightlike simultaneity, according to which two events are simultaneous if they both lie in the past-pointing horismos at the spacetime point $p$ of the central observer (the past-pointing horismos is tangent to the backward light cone).  More specifically, using the notation above, let  $\psi_{\textsc{t}} : \mathcal{U} \rightarrow \mathbb{R}$ be given by,
\begin{equation}\label{slice}
\psi_{\textsc{t}}(p)=g(\exp_{\beta_{0}(\textsc{t})}^{-1}p,\, \exp_{\beta_{0}(\textsc{t})}^{-1}p),
\end{equation}
where in this context, we denote proper time along $\beta_{0}$ by $\textsc{t}$.\footnote{In Sect.\ref{opticalcoords} the symbol $\textsc{t}$ will denote the time coordinate in optical coordinates which is different from the Fermi time coordinate $\tau$. However, these two coordinate times are identical along the path $\beta_{0}$ where they are both proper time.}  Now define
\begin{equation}
E_{\textsc{t}}\equiv\psi_{\textsc{t}}^{-1}(0)-\{\beta_{0}(\textsc{t})\}.
\end{equation}
The 3-dimensional submanifold, $E_{\textsc{t}}$, is called the horismos submanifold at the spacetime point $\beta_{0}(\textsc{t})$ (where proper time $\textsc{t}$ is fixed).  An event $q$ is in  $E_{\tau}$ if and only if $q \neq\beta_{0}(\textsc{t})$ and there exists a lightlike geodesic joining $\beta_{0}(\textsc{t})$ and $q$.  $E_{\textsc{t}}$ has two connected components,
$E_{\textsc{t}}^{+}$ and $E_{\textsc{t}}^{-}$, respectively, the future-pointing and past-pointing horismos submanifolds of $\beta_{0}(\textsc{t})$ (see \cite{Bolos05} and the references  therein).\\

\noindent We note that for a comoving observer $\beta_{0}$, it was proved in \cite{klein13} (see also \cite{carney}) that for a broad class of Robertson-Walker cosmologies, the maximal chart for Fermi coordinates is exactly the causal past of $\beta_{0}$ so that,

\begin{equation}\label{causalpast}
\bigcup_{\tau>0}\mathcal{M}_{\tau}= \bigcup_{\textsc{t}>0}E^-_{\textsc{t}}\cup\beta_{0}.
\end{equation}

\noindent The spectroscopic relative velocity $v_{\mathrm{spec}}$ is calculated analogously to $v_{\mathrm{kin}}$, described above, except that the 4-velocity $u'$ of the test particle is parallel transported to the tangent space of the observer along a null geodesic lying on the past-pointing horismos of the central observer, instead of along the Fermi space slice.  The astrometric relative velocity, $v_{\mathrm{ast}}$, of a test particle whose motion is purely radial is calculated analogously to $v_{\mathrm{Fermi}}$, as the rate of change of the affine distance (see Sect. \ref{opticalcoords}), which corresponds to the \textit{observed} proper distance (through light signals at the time of observation) with respect to the proper time of the observer, as may be done via parallax measurements. A complete description is given in \cite{bolos}.\\

\noindent In Minkowski spacetime, the coordinates of three Lorentz frames (with relative velocities as in Eq. \eqref{special}) in ``standard configuration'' relative to each other are related by Lorentz boosts along a given space axis of a central observer, and all origins of coordinates coincide when all of the time coordinates equal zero.  The natural generalization to Robertson-Walker cosmologies is to consider three observers co-moving with the Hubble flow, with collinear relative motion (i.e. with two fixed space coordinates). In the case of the Milne universe, this configuration amounts to three special relativistic observers in standard configuration.\\  

\noindent In this paper, we obtain generalizations of Eq \eqref{special} for comoving observers in Robertson-Walker spacetimes.  As a special case, using general methods we recover Eq \eqref{special} for the Milne universe which is diffeomorphic to the forward light cone of Minkowski spacetime under a simple change of coordinates. In the Milne universe, Eq. \eqref{special} holds for all of the geometrically defined relative velocities except for the astrometric velocity, which follows a different addition law (Eq. \eqref{milneastroadd}).   The addition laws for the four geometrically defined relative velocities are strikingly different. For example, we show that for lightlike simultaneous observers, the spectroscopic relative velocity follows Eq \eqref{special} in all Robertson-Walker spacetimes considered in this paper, while the addition formulas for the other relative velocities are quite different.\\

\noindent Unlike Minkowski spacetime, for a general Robertson-Walker cosmology, the distinct foliations by sets of simultaneous events for each of the three observers forces the existence of more than one velocity addition formula, depending on how comparisons are made. \\

\noindent This paper is organized as follows.  Sections \ref{RWmetric} and \ref{3observers} develop notation for Fermi and optical coordinates and spacetime paths of observers and test particles.  In Sections \ref{Fermicoords} and \ref{opticalcoords} we express the Robertson-Walker metric in Fermi coordinates and optical coordinates and slightly generalize formulas for relative velocities found in \cite{Bolos12} that we use in the proofs of our theorems.  Sections \ref{spacelikesection} and \ref{lightadd} give general velocity addition formulas for the Fermi, kinematic, spectroscopic, and astrometric relative velocities. In Section \ref{hubble} we show how, in special cases, one may use relationships between the Hubble velocity and the four geometrically defined relative velocities to derive velocity addition formulas.  We use that to derive velocity addition formulas for Robertson-Walker cosmologies with any power law scale factor. Examples for the de Sitter universe and particular power law cosmologies are given in Section \ref{examples}, and Section \ref{conclude} gives concluding remarks.

\section{The Robertson-Walker metric}\label{RWmetric}

\noindent The Robertson-Walker metric on space-time $\mathcal{M}$ is given by the line element,

\begin{equation}\label{frwmetric}
ds^2=-dt^2+a^2(t)\left[d\chi^2+S^2_k(\chi)d\Omega^2\right],
\end{equation}

\noindent where $d\Omega^2=d\theta^{2}+\sin^{2}\theta \,d\varphi^{2}$, $a(t)$ is the scale factor, and, 

\begin{equation}\label{Sk}
S_{k}(\chi)=
\begin{cases}
\sin\chi &\text{if}\,\, k= 1\\
 \chi&\text{if}\,\,k= 0\\ 
 \sinh\chi&\text{if}\,\,k=-1.
\end{cases}
\end{equation}

\noindent The coordinate $t>0$ is cosmological time and $\chi, \theta, \varphi$ are dimensionless. The values $+1,0,-1$ of the parameter $k$ distinguish the three possible maximally symmetric space slices for constant values of $t$ with positive, zero, and negative curvatures respectively.  The radial coordinate $\chi$ takes all positive values for $k=0$ or $-1$, but is bounded above by $\pi$ for $k=+1$.\\

\noindent We assume throughout that $k= 0$ or $-1$ so that the range of $\chi$ is unrestricted. The techniques employed for these two cases may be extended to the case $k=+1$ with the additional restriction that $\chi<\pi$ so that spacelike geodesics do not intersect.\\

\noindent There is a coordinate singularity in \eqref{frwmetric} at $\chi =0$, but this will not affect the calculations that follow. Since our intention is to study radial motion with respect to a central observer, it suffices to consider the $2$-dimensional Robertson-Walker metric given by
\begin{equation}
\label{metric}
\mathrm{d}s^2=-\mathrm{d}t^2+a^2(t)\mathrm{d}\chi ^2,
\end{equation}
for which there is no singularity at $\chi=0$ and we may allow $\chi\in(-\infty, \infty)$.  We assume throughout that $a(t)$ is a smooth, increasing function of $t>0$.

\section{Notation for three observers}\label{3observers}

We denote a central observer by the path $\beta_0(t_0)=(t_0,0)$, a secondary observer by $\beta_1(t_1)=(t_1,\chi_1)$, and a test particle by $\beta_2(t_2)=(t_2,\chi_2)$. Each spacetime path is parameterized by its proper time, and we take $\beta_1$ and $\beta_2$ to be comoving so that $\chi_1$ and $\chi_2$ are constant. Without loss of generality we take $\chi_2$ to be positive, while allowing $\chi_1$ to assume both positive and negative values. We allow for the possibility that $\chi_1>\chi_2$.\\

\noindent The Fermi, kinematic, spectroscopic and astrometric velocities of a test particle relative to a particular observer, $\beta(\tau)$, are smooth vector fields defined along the path $\beta(\tau)$.  Those vector fields (along paths of observers) are denoted by upper case letters, and scalar fields (relative speeds) by lower case letters.\\

\noindent In what follows, the relative velocity vector fields are defined on the spacetime paths $\beta_0$ and $\beta_1$, i.e. the central and secondary observers. From their definitions (see \cite{bolos}) all four relative velocity vector fields are spacelike and orthogonal to the 4-velocities of the observers.  They are each multiples of the unit vector field 
\begin{equation}
\mathcal{S}=\frac{1}{a(t)}\frac{\partial}{\partial\chi}. 
\end{equation}
The Fermi, kinematic, spectroscopic and astrometric velocity vector fields are denoted respectively by 
$V_{\mathrm{Fermi}}=v_{\mathrm{Fermi}}\mathcal{S}$, $V_{\mathrm{kin}}=v_{\mathrm{kin}}\mathcal{S}$,$V_{\mathrm{spec}}=v_{\mathrm{spec}}\mathcal{S}$, and $V_{\mathrm{ast}}=v_{\mathrm{ast}}\mathcal{S}$.  We note that the scalars $v_{\mathrm{Fermi}}$, $v_{\mathrm{kin}}$, etc. can be positive or negative, indicating velocities in the same or opposition direction as $\mathcal{S}$. We note that this notation differs from that used in \cite{Bolos12}.\\

\noindent Subscripts will be used to distinguish the different relative velocities. Subscripts 1, 2 and 3 will denote a velocity of $\beta_1$ relative to $\beta_0$, $\beta_2$ relative to $\beta_1$ and $\beta_2$ relative to $\beta_0$, respectively. E.g. $v_{\mathrm{Fermi}2}$ is the Fermi velocity of the test particle relative to the secondary observer.

\section{Fermi coordinates and spacelike simultaneity}\label{Fermicoords}

\noindent It may be shown that the metric \eqref{metric} expressed in Fermi coordinates $(\tau, \rho)$ for the central observer, $\beta_{0}$, (whose $\chi$-coordinate is taken to be zero) has the form,
\begin{equation}\label{fermipolar}
\mathrm{d}s^2=g_{\tau\tau}(\tau, \rho) \mathrm{d}\tau^2+\mathrm{d}\rho^2.
\end{equation}
General formulas for $g_{\tau \tau }$ were derived in \cite{randles,Bolos12,klein13}.  Non negative values of the spatial coordinate $\rho$ give the proper distance along spacelike geodesics orthogonal to the world line of the comoving Fermi or central observer.  However, consistent with Eq. \eqref{metric} we also allow $\rho$ to take negative values. \\

\noindent Formulas for relative velocities developed in \cite{randles,Bolos12} can be easily extended to accommodate negative values of the spatial coordinates.  At a given proper time $\tau$ of the central observer, suppose that the Fermi coordinates of a test particle are $(\tau, \rho)$ (and the coordinates of the central observer are $\beta_{0}(\tau)=(\tau,0)$).  Let the curvature coordinates of the test particle be $(t,\chi)$ (where $\chi$ is fixed). Then the kinematic velocity of a comoving test particle relative to the central observer is given by,

\begin{equation}
\label{eqvkin}
 v_{\mathrm{kin}} =\mathrm{sgn}(\chi)\sqrt{1-\frac{a^2(t)}{a^2(\tau )}}.
\end{equation}
\noindent The Fermi speed is given by $v_{\mathrm{Fermi}}= d\rho/d\tau $ for a radially moving test particle.\\

\noindent The common dependence on spacelike simultaneity of the Fermi and kinematic relative velocities allows for a direct comparison of these two notions of relative velocity of a test particle at a given spacetime point.  The Fermi and kinematic speeds of a radially moving test particle at the spacetime point $(\tau, \rho)$ in the Fermi coordinates of a central observer are related by,

\begin{equation}\label{vmetric}
v_{\mathrm{Fermi}}=\sqrt{-g_{\tau\tau}(\tau,\rho)}\,v_{\mathrm{kin}}.
\end{equation}

\noindent In what follows we will need Fermi coordinates for a comoving observer with some fixed $\chi$-coordinate, $\chi_{j}$, not necessarily equal to zero.  As in the previous section, let $\beta_j$ represent the worldline of a comoving observer whose $\chi$ coordinate is $\chi_j$. The proper time of $\beta_j$ is again $\tau_{j}$, and Fermi coordinates for the observer $\beta_j$ may be constructed.  In this case we denote the leading metric coefficient by $g_{\tau_{j}\tau_{j}}$, so that, for example, $g_{\tau\tau}$ may also be written as $g_{\tau_{0}\tau_{0}}$. \\

\noindent Functional relationships among the coordinates, $\tau_j$, $\chi_j$, $t$, and $\chi$, follow from the observation that the vector field,

\begin{equation}\label{spacegeo}
X=-\mathrm{sgn}(\chi-\chi_j)\sqrt{\left(\frac{a(\tau_j)}{a(t)}\right)^2-1}\frac{\partial}{\partial t}+\frac{a(\tau_j)}{a^2(t)}\frac{\partial}{\partial \chi},
\end{equation}

\noindent is geodesic, unit, spacelike and orthogonal to the  4-velocity of $\beta_j$, i.e. $X$ is tangent to $\mathcal{M}_{\tau_j}$ for each value of $\tau_{j}$. Let $q=(t,\chi)\in\mathcal{M}_{\tau_j}$. Then there exists an integral curve (for the vector field $X$ or $-X$) from $p\in\beta_{j}$ to $q$, and thus using Eq. \eqref{spacegeo} we can find a relationship between $\tau_j$, $\chi_j$, $t$, and $\chi$ (see \cite{Bolos12}):

\begin{equation}
\int_t^{\tau_j}\frac{a(\tau_j)}{a(\tilde{t})}\frac{1}{\sqrt{a^2(\tau_j)-a^2(\tilde{t})}}\mathrm{d}\tilde{t}=|\chi-\chi_j|.\\
\end{equation}

\begin{remark}\label{spaceobservers}  Eq.\eqref{eqvkin} can be easily adapted to yield velocities of a test particle relative to the comoving observer, $\beta_{j}$  by replacing $\mathrm{sgn}(\chi)$ by  $\mathrm{sgn}(\chi-\chi_{j})$ and $\tau$ by $\tau_{j}$.
\end{remark}

\section{Optical coordinates and lightlike simultaneity}\label{opticalcoords}

In the framework of lightlike simultaneity, it will be convenient to use optical coordinates (also known as observational coordinates).  Following \cite{Bolos12}, we may express the Robertson-Walker metric in the optical coordinates, $(\textsc{t}, \delta)$, determined by the central observer $\beta_{0}$,
\begin{equation}
\label{opticalmetric}
\mathrm{d}s^2=\tilde{g}_{\textsc{t}\textsc{t}}\mathrm{d}\textsc{t} ^2+2\,\mathrm{sgn}(\delta)\,\mathrm{d}\textsc{t}\, \mathrm{d}\delta\equiv -2\left( 1-\frac{\dot{a}(\textsc{t})}{a(\textsc{t})}|\delta| -\frac{1}{2}\frac{a^2\left( t\right)}{a^2(\textsc{t} )}\right)\mathrm{d}\textsc{t} ^2+2\,\mathrm{sgn}(\delta)\,\mathrm{d}\textsc{t}\, \mathrm{d}\delta ,
\end{equation}
where $t(\textsc{t} ,\delta )$ is given implicitly by,

\begin{equation}
\label{delta}
\delta=\delta (t,\chi)=\mathrm{sgn}(\chi)\int _{t}^{\textsc{t} (t,\chi)} \frac{a(u)}{a\left( \textsc{t} (t,\chi)\right) }\,\mathrm{d}u.
\end{equation}
and $\textsc{t}(t,\chi)$ is determined implicitly from the equation,

\begin{equation}
\label{eqt1b}
\int _{t}^{\textsc{t} } \frac{1}{a(u)}\,\mathrm{d}u=|\chi|.
\end{equation}

\begin{remark}\label{times}
Along the path, $\beta_{0}$, of a comoving central observer, the three time coordinates $t, \tau, \textsc{t}$ (respectively from curvature coordinates, Fermi coordinates, and optical coordinates) take the same values and are identical. They differ, however, at spacetime points off the path $\beta_{0}$. We note that our notation for the optical time coordinate $\textsc{t}$ differs from that used in \cite{Bolos12}.  We have also dropped the subscript $\ell$ in $t$ and $\chi$ used in that reference.
\end{remark}

\noindent Formulas for relative velocities developed in \cite{Bolos12} can be readily extended to accommodate negative values of the spatial coordinates. The spectroscopic and astrometric velocities of a comoving test particle at fixed spatial coordinate $\chi$ relative to the central observer, $\beta_{0}(\textsc{t})$, are given by,

\begin{equation}
\label{eqvspec}
v_{\mathrm{spec}}=\mathrm{sgn}(\chi)\frac{a^2(\textsc{t} )-a^2(t)}{a^2(\textsc{t} )+a^2(t)}
\end{equation}
and
\begin{equation}
\label{eqvast}
 v_{\mathrm{ast}}=\mathrm{sgn}(\chi)\left(1-|\delta| \frac{\dot{a}(\textsc{t} )}{a(\textsc{t} )}-\frac{a^2(t)}{a^2(\textsc{t} )}\right).
\end{equation}
where $\delta $ is the affine distance parameter from the central observer to the comoving test particle.  As in the previous section, the formulas here for relative velocities  can be easily adapted to yield velocities of a test particle relative to the comoving observer, $\beta_{j}$ by replacing $\mathrm{sgn}(\chi)$ by  $\mathrm{sgn}(\chi-\chi_{j})$ and $\tau$ by $\tau_{j}$.\\

\noindent The following comparison of the astrometric and spectroscopic relative velocities is also possible (see \cite{Bolos12}) because of their common dependence on timelike simultaneity,  

\begin{equation}
\label{vastend2}
v_{\mathrm{ast}}=-\frac{\mathrm{sgn}(\delta)}{2}\left(\tilde{g}_{\textsc{t}\textsc{t}}(\textsc{t},\delta)+
\frac{1-\mathrm{sgn}(\delta)\,v_{\mathrm{spec}}}{1+\mathrm{sgn}(\delta)\,v_{\mathrm{spec}}}\right) .
\end{equation}

\section{Velocity addition formulas for spacelike simultaneity}\label{spacelikesection}

\noindent In this section we derive velocity addition formulas for the Fermi and kinematic relative velocities of comoving observers. In general there can be no single formula like Eq.\eqref{special} for either relative velocity.  This is because the respective Fermi submanifolds of simultaneous events for the central and secondary observers do not coincide, and as a consequence the spacetime points at which relative velocities are calculated can be reasonably chosen in more than one way.  We show, however, for the special case of the Milne universe (essentially for the case of special relativity), that Eq.\eqref{special} follows as a special case of Theorem \ref{spacelike3} and Corollary \ref{Fermiadd} below. \\  

\noindent The figures below depict three general scenarios for which addition formulas may be deduced, and we refer to them in Theorem \ref{spacelike3} and Corollary \ref{Fermiadd}.  In each figure, the comoving paths $\beta_{i}$ as described in Sect. \ref{3observers} are indicated.  The vertical axis is cosmological time, which is also proper time for each comoving path, and the horizontal axis is the $\chi$-axis.  The curves labelled by $\Psi_{0}$ or $\Psi_{0}^{\pm}$ are spacelike geodesics orthogonal to the central observer $\beta_{0}$; $\Psi_{1}$ is a spacelike geodesic orthogonal to the secondary observer $\beta_{1}$.  

\begin{figure}[H]
\centering
\includegraphics[scale=0.3]{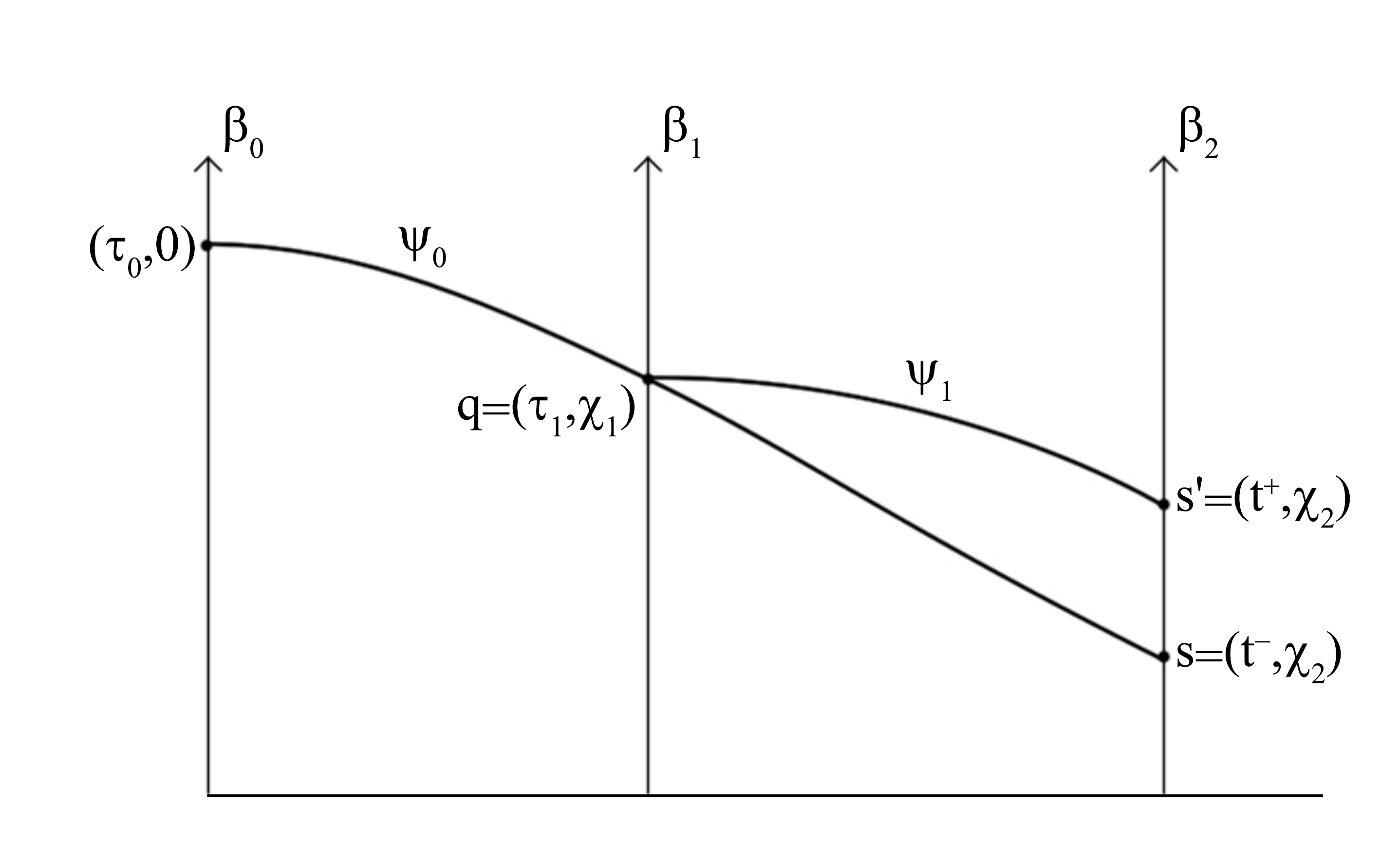}
\caption{Scenario I. Here $\Psi_0$ is the spacelike geodesic orthogonal to $\beta_0$, and $\Psi_1$ is the spacelike geodesic orthogonal to $\beta_1$. $(\tau_1,\chi_1)$ is the unique event $\Psi_0\cap\beta_1$ and $(\tau_2^+,\chi_2)$ is the unique event $\Psi_1\cap\beta_2$. The velocities of $\beta_2$ and $\beta_1$ relative to $\beta_0$ are taken at proper time $\tau_0$ of $\beta_0$, while the velocity of $\beta_2$ relative to $\beta_1$ is taken at proper time $\tau_1$ of $\beta_1$. Note that in this case the velocity of $\beta_2$ relative to the observers will be computed for separate events on the world line of $\beta_2$, i.e. the time discrepancy has been placed on the world line of the test particle.}\label{fc1}
\end{figure}

\begin{figure}[H]
\centering
\includegraphics[scale=0.3]{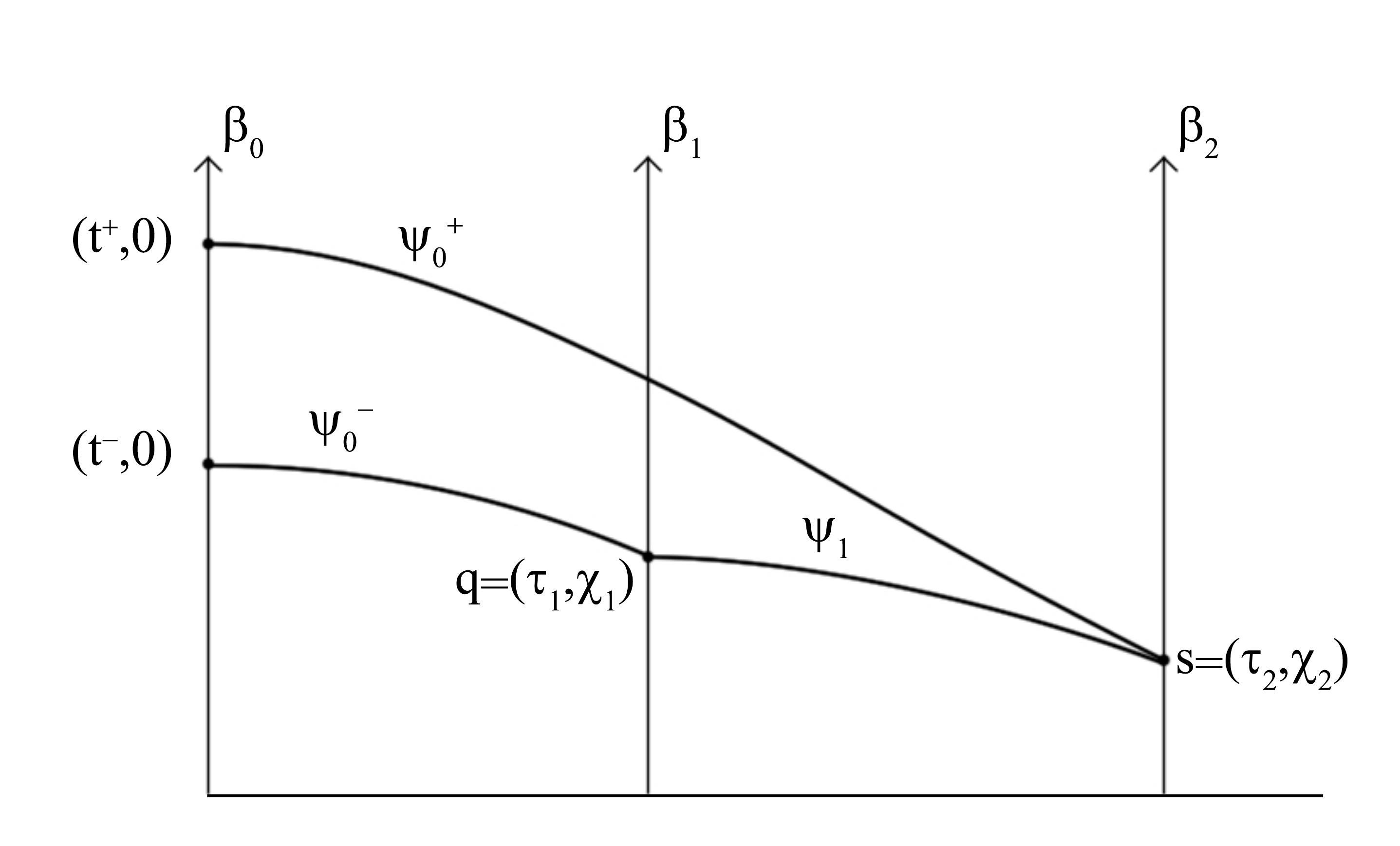}
\caption{Scenario II. Here $\Psi_0^+$ and $\Psi_0^-$ are the spacelike geodesics orthogonal to $\beta_0$ at $t=\tau_0^+$ and $t=\tau_0^-$, respectively. $(\tau_1,\chi_1)$ is the unique event $\Psi_0^-\cap\beta_1'$, and $\Psi_1$ is the spacelike geodesic orthogonal to $\beta_1$ emanating from this event. $\beta_1$, $\Psi_0^+$ and $\Psi_1$ intersect at the event $(\tau_2,\chi_2)$. Note that in this case the time discrepancy is on the world line of the central observer.}\label{fc2}
\end{figure}

\begin{figure}[H]
\centering
\includegraphics[scale=0.3]{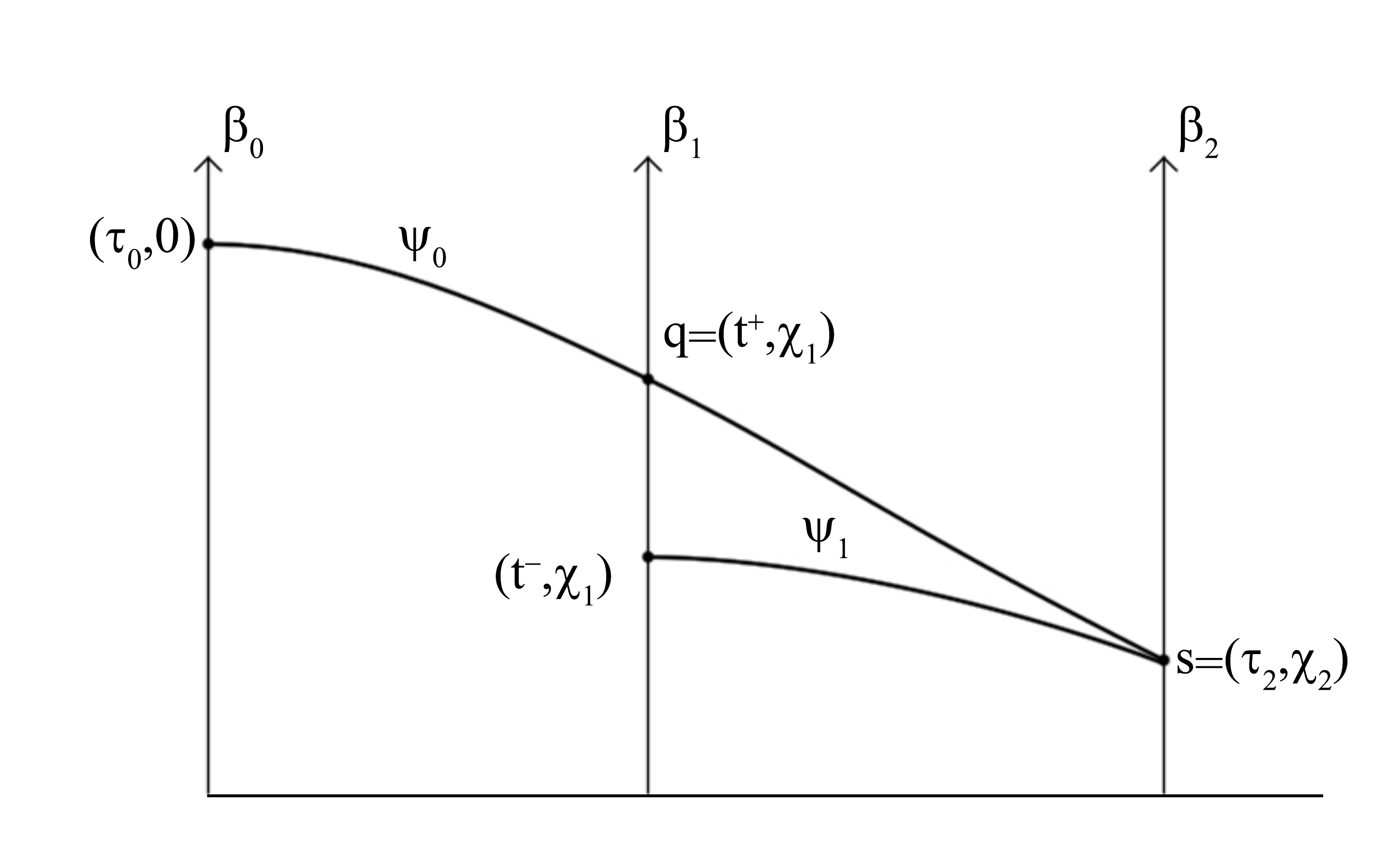}
\caption{Scenario III. Here $\Psi_0$ is a spacelike geodesic orthogonal to $\beta_0$, and intersects both $\beta_1$ and $\beta_2$ at $(\tau_1^+,\chi_1)$ and $(\tau_2,\chi_2)$, respectively. $\Psi_1$ is the unique geodesic orthogonal to $\beta_1$ such that $(\tau_2,\chi_2)=\beta_2\cap\Psi_1$. In this scenario the time discrepancy is on the world line of the secondary observer.}\label{fc3}
\end{figure}

\begin{remark}
Figures \ref{fc1}, \ref{fc2} and \ref{fc3} depict the case that $\chi_2>\chi_1>0$. However it is possible to connect the spacetime events shown in each diagram by spacelike geodesics for arbitrary configurations of the observers and test particle. The results we derive are thus valid in greater generality than the figures imply.
\end{remark}

\begin{theorem}\label{spacelike3}
For the scenarios depicted in Figures \ref{fc1}, \ref{fc2} and \ref{fc3}, let $v_{\mathrm{kin}1}$ be the kinematic speed of $\beta_{1}$ relative to $\beta_{0}$; $v_{\mathrm{kin}2}$ be the kinematic speed of $\beta_{2}$ relative to $\beta_{1}$; and $v_{\mathrm{kin}3}$ be the kinematic speed of $\beta_{2}$ relative to $\beta_{0}$.  Then,

\begin{equation}\label{veladdspace3}
(1-v_{\mathrm{kin}3}^{2})=\frac{a^{2}(t^{-})}{a^{2}(t^{+})}(1-v_{\mathrm{kin}1}^{2})(1-v_{\mathrm{kin}2}^{2}),
\end{equation}
where the proper times $t^{+}$ and $t^{-}$ are indicated in the figures and are determined by the indicated spacelike geodesics.
\end{theorem}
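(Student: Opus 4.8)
The plan is to reduce all three scenarios to the single kinematic identity already recorded in Eq.~\eqref{eqvkin} together with Remark~\ref{spaceobservers}: if a comoving test particle's event lies on the Fermi slice $\mathcal{M}_{\tau_{j}}$ of a comoving observer $\beta_{j}$, then
\begin{equation}\label{keyid}
1-v_{\mathrm{kin}}^{2}=\frac{a^{2}(t)}{a^{2}(\tau_{j})},
\end{equation}
where $t$ is the cosmological time of the test-particle event and $a(\tau_{j})$ is the scale factor at the observer (for a comoving worldline proper time coincides with cosmological time, so $a(\tau_{j})$ is unambiguous, cf.\ Remark~\ref{times}). The essential feature of \eqref{keyid} is that it involves only $1-v_{\mathrm{kin}}^{2}$ and is therefore insensitive to the sign of $v_{\mathrm{kin}}$; this is exactly why the theorem is phrased through $1-v_{\mathrm{kin}}^{2}$ and why the individual signs of $v_{\mathrm{kin}1},v_{\mathrm{kin}2},v_{\mathrm{kin}3}$ never enter. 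In two dimensions the spacelike geodesics $\Psi_{0},\Psi_{0}^{\pm},\Psi_{1}$ drawn in the figures are precisely Fermi slices of $\beta_{0}$ or $\beta_{1}$, so \eqref{keyid} applies directly on each of them.

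First I would translate each figure into three instances of \eqref{keyid}, recording for each of $v_{\mathrm{kin}1},v_{\mathrm{kin}2},v_{\mathrm{kin}3}$ the observer proper time (the denominator) and the cosmological time of the test-particle event on the relevant slice (the numerator). For Scenario~I, $v_{\mathrm{kin}1}$ is read on $\Psi_{0}=\mathcal{M}_{\tau_{0}}$ at the event $\Psi_{0}\cap\beta_{1}=(\tau_{1},\chi_{1})$; $v_{\mathrm{kin}2}$ on $\Psi_{1}=\mathcal{M}_{\tau_{1}}$ at $\Psi_{1}\cap\beta_{2}=(t^{+},\chi_{2})$; and $v_{\mathrm{kin}3}$ on $\Psi_{0}$ at $\Psi_{0}\cap\beta_{2}=(t^{-},\chi_{2})$. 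Thus the time of $\beta_{1}$ is shared by $v_{\mathrm{kin}1}$ and $v_{\mathrm{kin}2}$, the time $\tau_{0}$ of $\beta_{0}$ is shared by $v_{\mathrm{kin}1}$ and $v_{\mathrm{kin}3}$, and the two events on $\beta_{2}$ carry the distinct times $t^{+}$ and $t^{-}$.

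The algebra is then a one-line cancellation. Multiplying the first two relations, the intermediate scale factor at $\beta_{1}$ telescopes, so that
\begin{equation}
(1-v_{\mathrm{kin}1}^{2})(1-v_{\mathrm{kin}2}^{2})=\frac{a^{2}(\tau_{1})}{a^{2}(\tau_{0})}\,\frac{a^{2}(t^{+})}{a^{2}(\tau_{1})}=\frac{a^{2}(t^{+})}{a^{2}(\tau_{0})},\qquad 1-v_{\mathrm{kin}3}^{2}=\frac{a^{2}(t^{-})}{a^{2}(\tau_{0})}.
\end{equation}
Dividing the second expression by the first cancels the common observer factor $a^{2}(\tau_{0})$ and leaves exactly $a^{2}(t^{-})/a^{2}(t^{+})$, which is \eqref{veladdspace3}. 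The same telescoping works verbatim in Scenarios~II and III; only the worldline carrying the ``time discrepancy'' changes. In Scenario~II the single event $(\tau_{2},\chi_{2})$ serves as the test-particle event for both $v_{\mathrm{kin}2}$ and $v_{\mathrm{kin}3}$, so its scale factor cancels as well and the surviving factor lives on $\beta_{0}$, with $t^{\mp}=\tau_{0}^{\mp}$. In Scenario~III the common event $\Psi_{0}\cap\beta_{2}$ and the common observer time $\tau_{0}$ both cancel, and the surviving factor lives on $\beta_{1}$, with $t^{\mp}=\tau_{1}^{\mp}$.

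The only genuine work—and the step I expect to be the main obstacle—is the bookkeeping of the first two paragraphs: correctly identifying, from each figure, which intersection of a worldline $\beta_{i}$ with which Fermi slice supplies the numerator $t$ and which observer proper time supplies the denominator in \eqref{keyid}, and in particular recognizing which two of the three relevant events coincide in Scenarios~II and III. Once the three times are assigned the cancellation is automatic, and the residual factor $a^{2}(t^{-})/a^{2}(t^{+})$ records precisely the mismatch of the Fermi simultaneities of $\beta_{0}$ and $\beta_{1}$ along the worldline carrying the discrepancy; it equals $1$ only in the degenerate case $t^{-}=t^{+}$, i.e.\ when the two slices meet that worldline at the same event. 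I would stress that this factor is generically nontrivial even in flat spacetime, so recovering the familiar special-relativistic law \eqref{special} in the Milne universe requires a further, spacetime-specific evaluation of $a^{2}(t^{-})/a^{2}(t^{+})$, which is the role of the subsequent corollary and examples rather than of Theorem~\ref{spacelike3} itself.
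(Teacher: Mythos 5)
Your proposal is correct and takes essentially the same route as the paper's own proof: both arguments read off the three instances of $1-v_{\mathrm{kin}}^{2}=a^{2}(t)/a^{2}(\tau_{j})$ (Eq.~\eqref{eqvkin} together with Remark~\ref{spaceobservers}) from each figure and verify Eq.~\eqref{veladdspace3} by the same telescoping cancellation, your version merely making the bookkeeping of shared events and the cancellation pattern more explicit than the paper's terse substitution. Your closing remark is also accurate: the factor $a^{2}(t^{-})/a^{2}(t^{+})$ is generically nontrivial even for the Milne universe, where its evaluation as $\left(1+v_{\mathrm{kin}1}v_{\mathrm{kin}2}\right)^{-2}$ is exactly what Example~\ref{milne1} supplies to recover Eq.~\eqref{special}.
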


\begin{proof}
For fig. \ref{fc1}, using Eq. \eqref{eqvkin}, we may write,

\begin{equation}\label{3kin123}
v_{\mathrm{kin}1}^{2} =1-\frac{a^{2}(\tau_1)}{a^2(\tau_{0} )}\quad\quad
v_{\mathrm{kin}2}^{2} =1-\frac{a^{2}(t^+)}{a^2(\tau_1)}\quad\quad
v_{\mathrm{kin}3}^{2} =1-\frac{a^{2}(t^-)}{a^2(\tau_{0})}\\
\end{equation}
The result now follows by direct substitution of Eqs.\eqref{3kin123} into Eq.\eqref{veladdspace3}. Similarly for figures \ref{fc2} and \ref{fc3}, we may write

\begin{equation}
\label{2kin123}
v_{\mathrm{kin}1}^{2} =1-\frac{a^{2}(\tau_{1})}{a^2(t^- )}\quad\quad
v_{\mathrm{kin}2}^{2} =1-\frac{a^{2}(\tau_{2})}{a^2(\tau_{1})}\quad\quad
v_{\mathrm{kin}3}^{2} =1-\frac{a^{2}(\tau_{2})}{a^2(t^+)}\\
\end{equation}

\noindent and

\begin{equation}
\label{1kin123}
v_{\mathrm{kin}1}^{2} =1-\frac{a^{2}(t^+)}{a^2(\tau_{0})}\quad\quad
v_{\mathrm{kin}2}^{2} =1-\frac{a^{2}(\tau_{2})}{a^2(t^-)}\quad\quad
v_{\mathrm{kin}3}^{2} =1-\frac{a^{2}(\tau_{2})}{a^2(\tau_{0})},\\
\end{equation}
respectively. Substitution of Eq. \eqref{2kin123} or Eq. \eqref{1kin123} into Eq. \eqref{veladdspace3} verifies the identity.

\end{proof}

\noindent Using Eq.\eqref{vmetric}, the following corollary is immediate.

\begin{corollary}\label{Fermiadd}
Following the notational conventions of Sect.\ref{Fermicoords}, the Fermi relative velocity addition formula for comoving observers is given by,
\begin{equation}\label{veladdspace4}
\left(1+\frac{v_{\mathrm{Fermi}3}^{2}}{g_{\tau_{0}\tau_{0}}(s)}\right)=\frac{a^{2}(t^{-})}{a^{2}(t^{+})}\left(1+\frac{v_{\mathrm{Fermi}1}^{2}}{g_{\tau_{0}\tau_{0}}(q)}\right)\left(1+\frac{v_{\mathrm{Fermi}2}^{2}}{g_{\tau_{1}\tau_{1}}(s)}\right),
\end{equation}
where $s$ and $q$ stand for the appropriate Fermi coordinates for the spacetime points indicated in Figures \ref{fc1}, \ref{fc2} and \ref{fc3}, and where $g_{\tau_{1}\tau_{1}}(s)$ is replaced by $g_{\tau_{1}\tau_{1}}(s')$ for the scenario depicted in Figure \ref{fc1}.
\end{corollary}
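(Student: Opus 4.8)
The plan is to derive \eqref{veladdspace4} directly from Theorem \ref{spacelike3} by rewriting each factor $(1-v_{\mathrm{kin}}^{2})$ in terms of the corresponding Fermi speed by means of \eqref{vmetric}. First I would square \eqref{vmetric} to obtain $v_{\mathrm{Fermi}}^{2}=(-g_{\tau\tau})\,v_{\mathrm{kin}}^{2}$, and, since the leading metric coefficient in \eqref{fermipolar} is timelike and hence negative, solve for $v_{\mathrm{kin}}^{2}=-v_{\mathrm{Fermi}}^{2}/g_{\tau\tau}$. This produces the single algebraic identity
\begin{equation*}
1-v_{\mathrm{kin}}^{2}=1+\frac{v_{\mathrm{Fermi}}^{2}}{g_{\tau\tau}},
\end{equation*}
which converts each kinematic factor appearing in \eqref{veladdspace3} into exactly the form appearing in \eqref{veladdspace4}.

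Next I would apply this identity three times, once for each of $v_{\mathrm{kin}1}$, $v_{\mathrm{kin}2}$, and $v_{\mathrm{kin}3}$, being careful to insert the leading metric coefficient belonging to the observer relative to which that speed is measured, evaluated at the spacetime point at which the measurement is taken. Since $v_{\mathrm{kin}1}$ and $v_{\mathrm{kin}3}$ are both measured relative to the central observer $\beta_{0}$, their conversions use $g_{\tau_{0}\tau_{0}}$, at the points $q$ and $s$ respectively; since $v_{\mathrm{kin}2}$ is measured relative to the secondary observer $\beta_{1}$, its conversion uses $g_{\tau_{1}\tau_{1}}$, evaluated in the Fermi chart of $\beta_{1}$. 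Substituting the three resulting expressions into \eqref{veladdspace3} reproduces \eqref{veladdspace4}, with the scale-factor ratio $a^{2}(t^{-})/a^{2}(t^{+})$ carried over unchanged from the theorem.

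The only step requiring genuine care --- the one I would regard as the main obstacle --- is the bookkeeping of the evaluation points across the three scenarios, rather than any analytic difficulty. In particular, in Scenario I (Figure \ref{fc1}) the speeds $v_{\mathrm{Fermi}2}$ and $v_{\mathrm{Fermi}3}$ are read off at distinct events on the worldline of $\beta_{2}$, because the time discrepancy has been placed on the test particle's path; consequently the point feeding $g_{\tau_{1}\tau_{1}}$ in the $v_{\mathrm{Fermi}2}$ factor is $s'$ rather than the point $s$ used for $v_{\mathrm{Fermi}3}$, which accounts for the parenthetical replacement in the statement. For Scenarios II and III the relevant events coincide and the single point $s$ suffices. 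Once these assignments are fixed, the corollary is immediate from Theorem \ref{spacelike3}, and no computation beyond the substitution above is needed.
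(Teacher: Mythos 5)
Your proposal is correct and follows the same route as the paper, which states the corollary to be immediate from Theorem \ref{spacelike3} upon substituting $1-v_{\mathrm{kin}}^{2}=1+v_{\mathrm{Fermi}}^{2}/g_{\tau\tau}$ via Eq.~\eqref{vmetric}. Your expanded bookkeeping of the evaluation points --- including the use of $g_{\tau_{1}\tau_{1}}(s')$ in Scenario I because the time discrepancy lies on the worldline of $\beta_{2}$ --- simply makes explicit what the paper leaves tacit.
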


\begin{example}\label{milne1}
The velocity addition formula for special relativity (more precisely for the Milne Universe) is a special case of Theorem \ref{spacelike3}.  We illustrate this for the configuration shown in Fig.\ref{fc3}. Similar calculations show that the result holds for the configurations shown in Figs. \ref{fc1} and \ref{fc2}. The transformation formulas from curvature coordinates $(t,\chi)$ to Fermi coordinates $(\tau,\rho)$ are,
\begin{equation}\label{transf}
\tau=t\cosh\chi \quad\quad \rho=t\sinh\chi
\end{equation}
(see e.g. \cite{Bolos12}).  Expressed in Fermi coordinates, the metric for the Milne Universe (Eq.\eqref{frwmetric} with $k=-1$ and $a(t)=t$) becomes the Minkowski metric, i.e., in two spacetime dimensions,
\begin{equation}
ds^{2}=-d\tau^{2}+d\rho^{2}.
\end{equation}
Applying Eq. \eqref{transf} to Fig.\ref{fc3} gives,
\begin{equation}\label{transf1}
\tau_{0}=t_{1}^{+}\cosh\chi_{1}=\tau_{2}\cosh\chi_{2}
\end{equation}
and
\begin{equation}\label{transf2}
t_{1}^{-}=\tau_{2}\cosh(\chi_{2}-\chi_{1}).
\end{equation}
Since $a(t)=t$, we then get,

\begin{equation}
\frac{a(t_{1}^{+})}{a(t_{1}^{-})}=\frac{\cosh(\chi_{2})}{\cosh(\chi_{2}-\chi_{1})\cosh\chi_{1}}=1+\tanh\chi_{1}\tanh(\chi_{2}-\chi_{1}).
\end{equation}
By combining Eqs. \eqref{transf1} and \eqref{transf2} with Eq.\eqref{eqvkin}, this may be expressed as,

\begin{equation}\label{afrac}
\frac{a^{2}(t_{1}^{+})}{a^{2}(t_{1}^{-})}= \left(1+v_{\mathrm{kin}1}v_{\mathrm{kin}2}\right)^{2}.
\end{equation}
Now substituting Eq. \eqref{afrac} into Eq. \eqref{veladdspace3} gives,

\begin{equation}
\left(1+v_{\mathrm{kin}1}v_{\mathrm{kin}2}\right)^{2}(1-v_{\mathrm{kin}3}^{2})=(1-v_{\mathrm{kin}1}^{2})(1-v_{\mathrm{kin}2}^{2}),
\end{equation}
which reduces to
\begin{equation}
v_{\mathrm{kin}3}=\frac{v_{\mathrm{kin}1}+v_{\mathrm{kin}2}}{1+v_{\mathrm{kin}1}v_{\mathrm{kin}2}}.
\end{equation}
Applying Corollary \ref{Fermiadd} with $g_{\tau\tau}\equiv-1$ for the Milne universe then recovers Eq.\eqref{special},
\begin{equation}\label{specialfermi}
v_{\mathrm{Fermi}3}=\frac{v_{\mathrm{Fermi}1}+v_{\mathrm{Fermi}2}}{1+v_{\mathrm{Fermi}1}v_{\mathrm{Fermi}2}}.
\end{equation}
We note that for a comoving observer with fixed coordinate $\chi$, it follows directly from Eq.\eqref{transf} that,
\begin{equation}
v_{\mathrm{Fermi}}=\tanh\chi.
\end{equation}
Thus, $\chi$ is the rapidity parameter for the Lorentz boost determined by $v_{\mathrm{Fermi}}$ (see, e.g., \cite{hobson}), and this Lorentz transformation is a hyperbolic rotation by angle $\chi$.  \noindent In \cite{Bolos12}, it was also shown that $v_{\mathrm{kin}}=v_{\mathrm{Fermi}}=\tanh\chi$ for a comoving test particle in the Milne universe, or, in the language of special relativity, a test particle with constant velocity departing from the central observer at time zero.

\end{example}

\section{Velocity addition formulas for lightlike simultaneity}\label{lightadd}

\noindent In this section we derive velocity addition formulas for the spectroscopic and astrometric relative velocities of comoving observers. For the observers $\beta_{0}, \beta_{1}, \beta_{2}$, the prototype configuration is depicted in Fig.\ref{fig1spec} in which the spacetime events $p, q$, and $s$ are all lightlike simultaneous. \\  

\begin{figure}[H]
\centering
\includegraphics[trim=1mm 10mm 1mm 20mm,clip,scale=0.3]{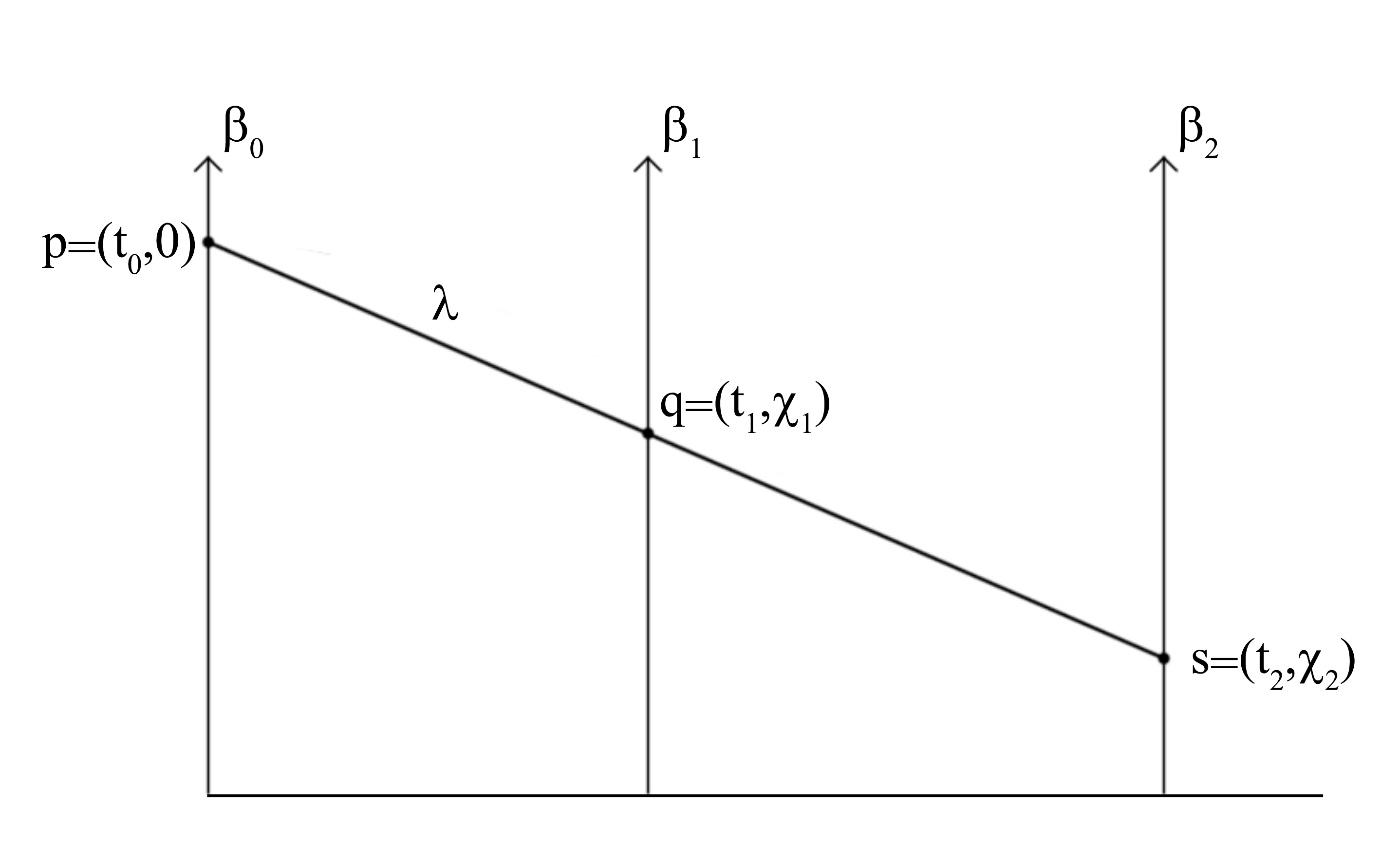}
\caption{Elements involved in the study of velocity addition for simultaneous (in optical coordinates) comoving observers. $\lambda$ is a lightlike geodesic.}\label{fig1spec}
\end{figure}

\begin{theorem}\label{spectheorem1}
For simultaneous (in optical coordinates) comoving observers the spectroscopic relative velocities are related by the special relativity addition formula, \eqref{special}, i.e.,
\begin{equation}\label{special2}
v_{\mathrm{spec}3}=\frac{v_{\mathrm{spec}1}+v_{\mathrm{spec}2}}{1+v_{\mathrm{spec}1}v_{\mathrm{spec}2}}
\end{equation}
\end{theorem}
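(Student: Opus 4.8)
The plan is to evaluate each of the three spectroscopic speeds directly from Eq.~\eqref{eqvspec} together with the adaptation noted after Eq.~\eqref{eqvast} (replacing $\mathrm{sgn}(\chi)$ by $\mathrm{sgn}(\chi-\chi_j)$ and the reception time by that of $\beta_j$), and then to exploit the fact that in the equivalent form \eqref{special'} the spectroscopic quantity $(1-v_{\mathrm{spec}})/(1+v_{\mathrm{spec}})$ collapses to a bare ratio of scale factors that telescopes. First I would fix notation for the three relevant cosmological times: let $\textsc{t}_0$ be the proper time at which $\beta_0$ receives the signal at $p$, let $t_q$ be the cosmological time of the event $q$ on $\beta_1$, and let $t_s$ be the cosmological time of the emission event $s$ on $\beta_2$. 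The essential geometric input, encoded in Fig.~\ref{fig1spec}, is that $p$, $q$, and $s$ all lie on the single past-pointing lightlike geodesic $\lambda$; consequently the light $\beta_0$ receives from $\beta_2$ is emitted at $s$ and passes through $q$ on its way to $p$, so that $\beta_0$'s reception time for $\beta_2$ is the same $\textsc{t}_0$ as for $\beta_1$, while $\beta_1$'s reception time for $\beta_2$ is exactly $t_q$.

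With these identifications, Eq.~\eqref{eqvspec} gives
\begin{equation}
v_{\mathrm{spec}1}=\frac{a^2(\textsc{t}_0)-a^2(t_q)}{a^2(\textsc{t}_0)+a^2(t_q)},\quad
v_{\mathrm{spec}2}=\frac{a^2(t_q)-a^2(t_s)}{a^2(t_q)+a^2(t_s)},\quad
v_{\mathrm{spec}3}=\frac{a^2(\textsc{t}_0)-a^2(t_s)}{a^2(\textsc{t}_0)+a^2(t_s)},
\end{equation}
where, for the configuration of Fig.~\ref{fig1spec}, all three sign factors equal $+1$. The next step is the short computation
\begin{equation}
\frac{1-v_{\mathrm{spec}1}}{1+v_{\mathrm{spec}1}}=\frac{a^2(t_q)}{a^2(\textsc{t}_0)},\qquad
\frac{1-v_{\mathrm{spec}2}}{1+v_{\mathrm{spec}2}}=\frac{a^2(t_s)}{a^2(t_q)},\qquad
\frac{1-v_{\mathrm{spec}3}}{1+v_{\mathrm{spec}3}}=\frac{a^2(t_s)}{a^2(\textsc{t}_0)},
\end{equation}
from which the product of the first two telescopes to the third. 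This is precisely the equivalent form \eqref{special'} of the addition law, and rearranging it back yields \eqref{special2}.

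The step I expect to carry the real content is the geometric claim that the same null geodesic $\lambda$ links all three events, i.e. that lightlike simultaneity is transitive along $\lambda$, so that $t_s$ is common to the computations of both $v_{\mathrm{spec}2}$ and $v_{\mathrm{spec}3}$ and $t_q$ serves as the single linking time. This is the analog of the ``standard configuration'' hypothesis of special relativity, and it is exactly what distinguishes this lightlike result from the spacelike Theorem~\ref{spacelike3}, where the two foliations fail to match and the extra factor $a^2(t^-)/a^2(t^+)$ survives. The only remaining bookkeeping is to confirm that the $\mathrm{sgn}$ factors in \eqref{eqvspec} combine correctly for more general placements of $\chi_1$ (including $\chi_1<0$ or $\chi_1>\chi_2$); since $\beta_0$ sits at $\chi=0$ and $\lambda$ travels in a single spatial direction, the relevant signs are consistent and the telescoping identity is unaffected, so no genuine obstacle arises there.
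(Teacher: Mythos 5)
Your proposal is correct and is essentially the paper's proof: both identify the three spectroscopic speeds from Eq.~\eqref{eqvspec} at the cosmological times of the three events on the common null geodesic (your $\textsc{t}_0, t_q, t_s$ are the paper's $t_0, t_1, t_2$) and then verify the addition law algebraically. Your only variation is checking the identity via the telescoping form \eqref{special'} rather than direct substitution into \eqref{special2}, which is a cosmetic difference the paper itself flags as equivalent; note only that your closing claim about arbitrary placements (e.g.\ $\chi_1<0<\chi_2$) is not quite automatic, since then no single null geodesic links all three events and the paper handles it separately (Theorem~\ref{spectheorem2}) with a non-simultaneous measurement scheme.
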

\begin{proof}
For simultaneous comoving observers the spectroscopic velocities depicted in  Fig.\ref{fig1spec}, are given by \eqref{eqvspec} as,
\begin{equation}\label{vspec3}
v_{\mathrm{spec}1}=\frac{a^2(t_0)-a^2(t_1)}{a^2(t_0)+a^2(t_1)},\quad\quad v_{\mathrm{spec}2}=\frac{a^2(t_1)-a^2(t_2)}{a^2(t_1)+a^2(t_2)},
\end{equation}
and
\begin{equation}\label{vspec2}
v_{\mathrm{spec}3}=\frac{a^2(t_0)-a^2(t_2)}{a^2(t_0)+a^2(t_2)}.
\end{equation}
The result follows from direct substitution of Eqs. \eqref{vspec3} into the right side of Eq. \eqref{special2} which then becomes Eq. \eqref{vspec2}.
\end{proof}

\noindent Theorem \ref{spectheorem1} and Fig.\ref{fig1spec} assume that $\chi_2>\chi_1>0$. However, this restriction is not essential.  For any ordering of the spatial coordinates of the observers and test particle a measurement scheme can be found so that the spectroscopic velocities are related by Eq. \eqref{special}, but the measurements may not all be simultaneous. To illustrate, we prove this for the case that $\chi_1<0<\chi_2$. Fig. \ref{fig2spec} indicates how the spectroscopic relative velocities are to be measured for this case.

\begin{figure}[H]
\centering
\includegraphics[trim=1mm 10mm 1mm 20mm,clip,scale=0.3]{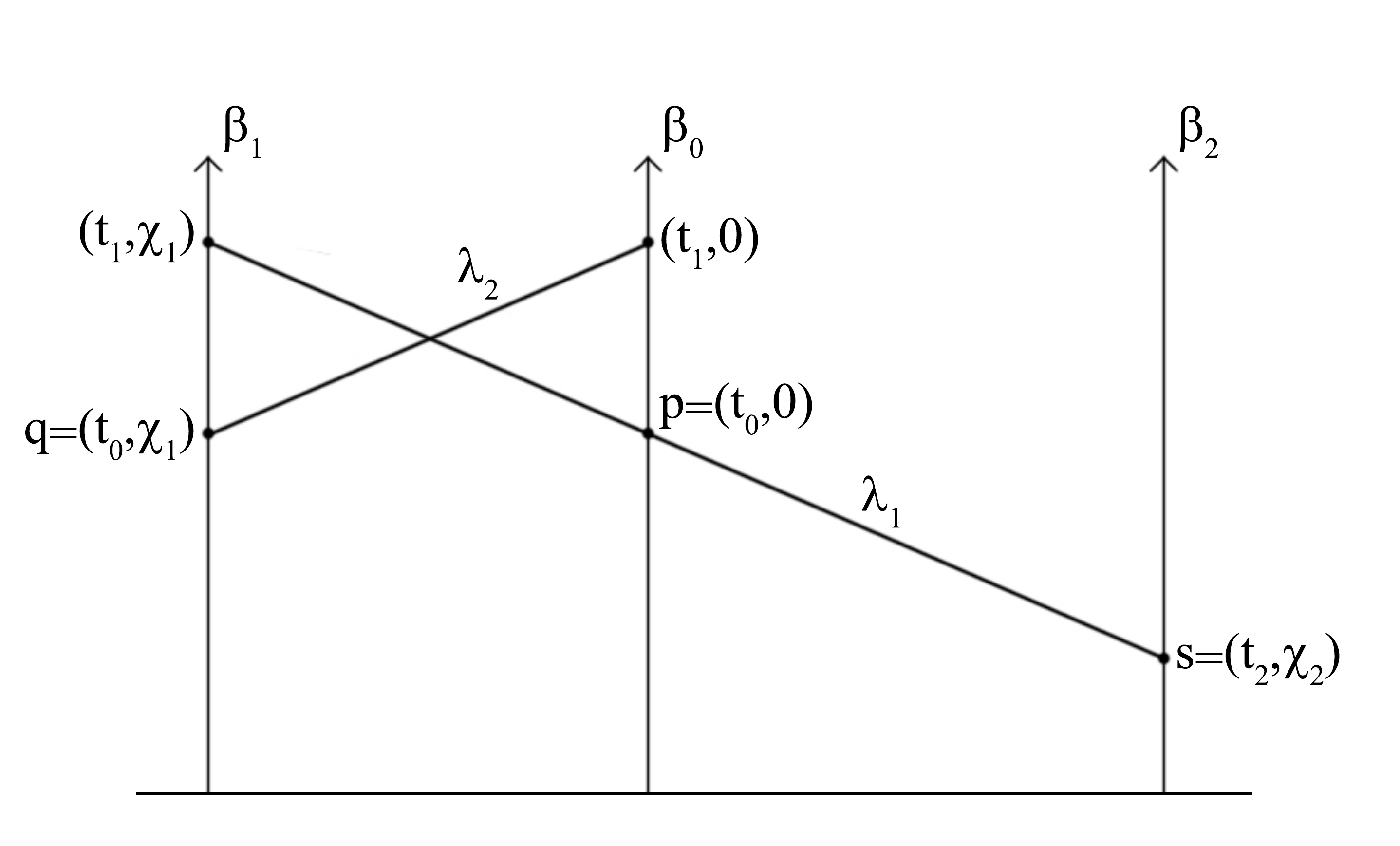}
\caption{In this scenario, $v_{\mathrm{spec}3}$ and $v_{\mathrm{spec}2}$ are the spectroscopic velocities of $\beta_2$ at the spacetime point $s$ relative to $\beta_0$ and $\beta_1$, respectively. $v_{\mathrm{spec}1}$ is the spectroscopic velocity of $\beta_1$ at the spacetime point $q$ relative to $\beta_0$. $\lambda_1$ and $\lambda_2$ are lightlike geodesics.}
\label{fig2spec}
\end{figure}

\begin{theorem}\label{spectheorem2}
The spectroscopic relative velocities of the comoving observers depicted in Figure \ref{fig2spec} are related by the special relativity addition formula, Eq. \eqref{special}, i.e.,
\begin{equation}\label{special2'}
v_{\mathrm{spec}3}=\frac{v_{\mathrm{spec}1}+v_{\mathrm{spec}2}}{1+v_{\mathrm{spec}1}v_{\mathrm{spec}2}}.
\end{equation}
\end{theorem}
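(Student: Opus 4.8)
The plan is to follow the template of the proof of Theorem \ref{spectheorem1}: express each spectroscopic speed through \eqref{eqvspec} (using the sign adaptation recorded after \eqref{eqvast}, i.e. replacing $\mathrm{sgn}(\chi)$ by $\mathrm{sgn}(\chi-\chi_j)$), pass to the equivalent ratio form \eqref{special'}, and show that the three resulting ratios telescope. The only genuine content is a judicious choice of the measurement events in Figure \ref{fig2spec}, since with $\beta_0$ now lying between $\beta_1$ and $\beta_2$ the three events can no longer be placed on a single null geodesic as they were in Theorem \ref{spectheorem1}. Concretely, I would write $s=(t_2,\chi_2)$ for the event on $\beta_2$ at which $v_{\mathrm{spec}2}$ and $v_{\mathrm{spec}3}$ are evaluated. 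Let $\lambda_2$ be the null geodesic emitted at $s$ that is later received on $\beta_0$ at $p=(\textsc{t}_0,0)$ and, continued, on $\beta_1$ at $\bar q=(\textsc{t}_1,\chi_1)$. Take $q=(\textsc{t}_0,\chi_1)$ to be the event on $\beta_1$ sharing the cosmological time of $p$, and let $\lambda_1$ be the null geodesic emitted at $q$ and received on $\beta_0$ at $p'=(\textsc{t}_0',0)$. Then $v_{\mathrm{spec}3}$ is measured by $\beta_0$ at $p$ (source $s$), $v_{\mathrm{spec}2}$ by $\beta_1$ at $\bar q$ (source $s$), and $v_{\mathrm{spec}1}$ by $\beta_0$ at $p'$ (source $q$), matching the caption of Figure \ref{fig2spec}.

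First I would record the null relations coming from \eqref{eqt1b}: along $\lambda_2$, $\int_{t_2}^{\textsc{t}_0} du/a(u)=\chi_2$ and $\int_{t_2}^{\textsc{t}_1} du/a(u)=\chi_2-\chi_1$, while along $\lambda_1$, $\int_{\textsc{t}_0}^{\textsc{t}_0'} du/a(u)=-\chi_1$. Subtracting the first relation from the second gives $\int_{\textsc{t}_0}^{\textsc{t}_1} du/a(u)=-\chi_1=\int_{\textsc{t}_0}^{\textsc{t}_0'} du/a(u)$, and since $a>0$ the map $t\mapsto\int_{\textsc{t}_0}^{t} du/a(u)$ is strictly increasing, so $\textsc{t}_1=\textsc{t}_0'$. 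This single identity — that the reception event of $\lambda_2$ on $\beta_1$ and the reception event of $\lambda_1$ on $\beta_0$ share one cosmological time — is the crux of the argument and the step I expect to be the main obstacle, because it is exactly what forces the intermediate scale factor to cancel; in Theorem \ref{spectheorem1} this coincidence was automatic from the use of a single null geodesic, whereas here it must be extracted from the additivity of the conformal-time integral $\int du/a$ along the two segments.

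To finish I would substitute into \eqref{eqvspec} with the appropriate signs, $\mathrm{sgn}(\chi_2)=+1$ for both $v_{\mathrm{spec}3}$ and $v_{\mathrm{spec}2}$ (as $\mathrm{sgn}(\chi_2-\chi_1)=+1$) and $\mathrm{sgn}(\chi_1)=-1$ for $v_{\mathrm{spec}1}$, to obtain the ratio-form values
\[
\frac{1-v_{\mathrm{spec}3}}{1+v_{\mathrm{spec}3}}=\frac{a^2(t_2)}{a^2(\textsc{t}_0)},\qquad
\frac{1-v_{\mathrm{spec}1}}{1+v_{\mathrm{spec}1}}=\frac{a^2(\textsc{t}_0')}{a^2(\textsc{t}_0)},\qquad
\frac{1-v_{\mathrm{spec}2}}{1+v_{\mathrm{spec}2}}=\frac{a^2(t_2)}{a^2(\textsc{t}_1)}.
\]
Multiplying the last two and using $\textsc{t}_1=\textsc{t}_0'$ collapses the product to $a^2(t_2)/a^2(\textsc{t}_0)$, which is precisely the ratio for $v_{\mathrm{spec}3}$. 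By the equivalence of \eqref{special'} and \eqref{special}, this identity is exactly \eqref{special2'}, completing the proof. Everything past the coincidence $\textsc{t}_1=\textsc{t}_0'$ is the same telescoping algebra used in Theorem \ref{spectheorem1}, so I expect the write-up to be short once the event configuration is fixed.
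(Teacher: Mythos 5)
Your proposal is correct and follows essentially the same route as the paper: the same measurement scheme as in Figure \ref{fig2spec}, the sign-adapted Eq. \eqref{eqvspec} giving $v_{\mathrm{spec}1}=\frac{a^2(t_0)-a^2(t_1)}{a^2(t_0)+a^2(t_1)}$, and then the identical telescoping algebra of Theorem \ref{spectheorem1}. The one step you prove explicitly --- the coincidence $\textsc{t}_1=\textsc{t}_0'$ of the two reception times, obtained from additivity of the conformal-time integral $\int du/a(u)$ --- is in the paper simply absorbed into the labeling of Figure \ref{fig2spec} (both events carry the time $t_1$), so your write-up supplies a detail the published proof leaves implicit.
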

\begin{proof}
For this scenario $v_{\mathrm{spec}2}$ and $v_{\mathrm{spec}3}$ are given in Eqs. \eqref{vspec3} and \eqref{vspec2}, respectively, while $v_{\mathrm{spec}1}$ is given by
\begin{equation}v_{\mathrm{spec}1}=-\left(\frac{a^2(t_1)-a^2(t_0)}{a^2(t_0)+a^2(t_1)}\right)=\frac{a^2(t_0)-a^2(t_1)}{a^2(t_0)+a^2(t_1)}.
\end{equation}
Thus the spectroscopic velocities are given by the same expressions as those used in the proof of Theorem \ref{spectheorem1}, and the result follows exactly as before.
\end{proof}

\noindent Using the velocity addition formula for the spectroscopic velocity and the functional relationship between the astrometric and spectroscopic velocities given by Eq. \eqref{eqvast}, we derive an addition formula for astrometric velocities for the lightlike simultaneity scenario depicted in Fig. \ref{fig1spec}.

\begin{lemma}\label{deltas}
Suppose that the spacetime points $\beta_{0}(t_{0})$, $\beta_{1}(t_{1})$, and $\beta_{2}(t_{2})$ lie on a null geodesic in the past pointing horismos $E_{t_{0}}^{-}$.  Denote the optical coordinates of $\beta_{1}(t_{1})$, and $\beta_{2}(t_{2})$ relative to the central observer $\beta_{0}$ by $(t_{0}, \delta_{1})$ and $(t_{0}, \delta_{2})$ respectively, and assume $\delta_{2}>\delta_{1}>0$ as in Fig. \ref{fig1spec}.  Let $\delta_{21}$ be the affine distance coordinate of $\beta_{2}(t_{2})$ in the optical coordinate system for $\beta_{1}$ (i.e. regarding $\beta_{1}$ as the central observer).  Then,

\begin{equation}
\delta_{2}=\delta_{1}+\frac{a(t_{1})}{a(t_{0})}\delta_{21}
\end{equation}
\end{lemma}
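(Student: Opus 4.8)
The plan is to compute all three affine distances directly from the defining formula \eqref{delta} for the optical coordinate $\delta$, together with the null-geodesic relation \eqref{eqt1b}, and then to exploit the additivity of the resulting integrals over the time interval $[t_2,t_0]$. The rescaling factor $a(t_1)/a(t_0)$ in the claimed identity should then appear exactly as the factor needed to normalize an integral taken in $\beta_1$'s optical frame to one taken in $\beta_0$'s frame.

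First I would record what the hypotheses say about $\beta_1(t_1)$ and $\beta_2(t_2)$. Having optical coordinates $(t_0,\delta_1)$ and $(t_0,\delta_2)$ relative to $\beta_0$ means that both events have optical time coordinate $t_0$, which by \eqref{eqt1b} (with $\beta_0$ at $\chi=0$) is precisely the pair of relations $\int_{t_1}^{t_0}\frac{du}{a(u)}=\chi_1$ and $\int_{t_2}^{t_0}\frac{du}{a(u)}=\chi_2$ — the statement that the past-directed radial null geodesic through $\beta_0(t_0)$ passes through $\beta_1(t_1)$ and $\beta_2(t_2)$. Then \eqref{delta}, with $\chi_1,\chi_2>0$, gives
\[
\delta_1=\frac{1}{a(t_0)}\int_{t_1}^{t_0}a(u)\,du,\qquad
\delta_2=\frac{1}{a(t_0)}\int_{t_2}^{t_0}a(u)\,du.
\]

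Next I would compute $\delta_{21}$ by applying the same formulas with $\beta_1$ in the role of the central observer, following the adaptation rule stated in Section \ref{opticalcoords} (replace $\mathrm{sgn}(\chi)$ by $\mathrm{sgn}(\chi-\chi_1)$). The crucial point, and the step I expect to be the main obstacle, is to verify that the optical time coordinate of $\beta_2(t_2)$ relative to $\beta_1$ equals $t_1$; this is where the hypothesis that the three events are collinear on a single null geodesic of $E_{t_0}^-$ does the real work. Subtracting the two null-geodesic relations above yields $\int_{t_2}^{t_1}\frac{du}{a(u)}=\chi_2-\chi_1$, which is exactly Eq. \eqref{eqt1b} for $\beta_1$ as central observer, identifying $t_1$ as the optical time at which the signal from $\beta_2(t_2)$ reaches $\beta_1$. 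With that in hand, \eqref{delta} (and $\mathrm{sgn}(\chi_2-\chi_1)=+1$) gives
\[
\delta_{21}=\frac{1}{a(t_1)}\int_{t_2}^{t_1}a(u)\,du.
\]

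Finally I would substitute into the right-hand side of the claimed identity. The factor $a(t_1)/a(t_0)$ cancels the $1/a(t_1)$ normalization in $\delta_{21}$, leaving
\[
\delta_1+\frac{a(t_1)}{a(t_0)}\delta_{21}
=\frac{1}{a(t_0)}\left(\int_{t_1}^{t_0}a(u)\,du+\int_{t_2}^{t_1}a(u)\,du\right)
=\frac{1}{a(t_0)}\int_{t_2}^{t_0}a(u)\,du=\delta_2,
\]
by additivity of the integral over $[t_2,t_1]\cup[t_1,t_0]$. Apart from this bookkeeping, the only genuine content is the reduction in the previous paragraph: once collinearity forces the $\beta_1$-optical time of $\beta_2(t_2)$ to be $t_1$, all three affine distances are expressible through the common integrand $a(u)$, and the rescaling by $a(t_1)/a(t_0)$ is precisely what glues the two subintervals together.
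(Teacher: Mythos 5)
Your proof is correct and follows essentially the same route as the paper's: express $\delta_1$, $\delta_2$, and $\delta_{21}$ via Eq.~\eqref{delta} and split the integral $\int_{t_2}^{t_0}a(u)\,\mathrm{d}u$ at $t_1$, with the factor $a(t_1)/a(t_0)$ rescaling $\beta_1$'s normalization to $\beta_0$'s. In fact you go slightly further than the paper, which silently assumes that the optical time of $\beta_2(t_2)$ in $\beta_1$'s coordinates is $t_1$; your verification of this from Eq.~\eqref{eqt1b} and the collinearity hypothesis is a worthwhile addition, not a deviation.
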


\begin{proof}
From Eq. \eqref{delta} with $\chi>0$,

\begin{align}
\delta_{2}&=\frac{1}{a(t_{0})}\int_{t_{2}}^{t_{0}}a(u)\mathrm{d}u\\
&=\frac{1}{a(t_{0})}\int_{t_{1}}^{t_{0}}a(u)\mathrm{d}u+\frac{a(t_{1})}{a(t_{0})}\frac{1}{a(t_{1})}\int_{t_{2}}^{t_{1}}a(u)\mathrm{d}u\\
&=\delta_{1}+\frac{a(t_{1})}{a(t_{0})}\delta_{21}.
\end{align}

\end{proof}

\begin{theorem}\label{asttheorem0}
Following the notation of Lemma \ref{deltas}, the astrometric relative velocities of the simultaneous comoving observers depicted in Figure \ref{fig1spec} are related by
\begin{equation}\label{astadd1}
-\left(2v_{\mathrm{ast}3}+\tilde{g}_{\textsc{t}\textsc{t}}(t_{0},\delta_{2})\right)=\left(2v_{\mathrm{ast}1}+\tilde{g}_{\textsc{t}\textsc{t}}(t_{0},\delta_{1})\right)\left(2v_{\mathrm{ast}2}+\tilde{g}_{\textsc{t}_{1}\textsc{t}_{1}}\left(t_{1}, \delta_{21}\right)\right),
\end{equation}
where $\tilde{g}_{\textsc{t}_{1}\textsc{t}_{1}}$ is the leading metric coefficient in optical coordinates for the secondary observer $\beta_{1}$ and
\begin{equation}
\delta_{21}=\frac{a(t_{0})}{a(t_{1})}(\delta_{2}-\delta_{1}).
\end{equation}
\end{theorem}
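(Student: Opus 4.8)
The plan is to reduce the astrometric addition law to the spectroscopic one, which is already established in Theorem \ref{spectheorem1}, by exploiting the algebraic relation \eqref{vastend2} between the two relative velocities. Because every affine distance in Figure \ref{fig1spec} is positive (we have $\delta_2 > \delta_1 > 0$, and the affine distance $\delta_{21}$ of $\beta_2$ in the optical frame of $\beta_1$ is positive since $\chi_2 > \chi_1$), I would set $\mathrm{sgn}(\delta)=1$ throughout. Equation \eqref{vastend2} then rearranges, for each of the three relative velocities, into the clean form
\begin{equation}
2\,v_{\mathrm{ast}} + \tilde{g}_{\textsc{t}\textsc{t}}(\textsc{t},\delta) = -\,\frac{1 - v_{\mathrm{spec}}}{1 + v_{\mathrm{spec}}},
\end{equation}
where the metric coefficient and its arguments belong to the observer relative to which the velocity is measured. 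I would write this relation out three times: for $v_{\mathrm{ast}1}$ (of $\beta_1$ relative to $\beta_0$) using $\tilde{g}_{\textsc{t}\textsc{t}}(t_0,\delta_1)$; for $v_{\mathrm{ast}3}$ (of $\beta_2$ relative to $\beta_0$) using $\tilde{g}_{\textsc{t}\textsc{t}}(t_0,\delta_2)$; and for $v_{\mathrm{ast}2}$ (of $\beta_2$ relative to $\beta_1$), computed in the optical coordinate system of $\beta_1$ and hence using $\tilde{g}_{\textsc{t}_1\textsc{t}_1}(t_1,\delta_{21})$.

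Next I would invoke Theorem \ref{spectheorem1}, whose conclusion is that the spectroscopic velocities obey the special-relativistic law \eqref{special2}. This is equivalent to its multiplicative form \eqref{special'},
\begin{equation}
\frac{1 - v_{\mathrm{spec}3}}{1 + v_{\mathrm{spec}3}} = \left(\frac{1 - v_{\mathrm{spec}1}}{1 + v_{\mathrm{spec}1}}\right)\left(\frac{1 - v_{\mathrm{spec}2}}{1 + v_{\mathrm{spec}2}}\right).
\end{equation}
Substituting the three rearranged astrometric relations into this identity immediately yields
\begin{equation}
-\bigl(2v_{\mathrm{ast}3} + \tilde{g}_{\textsc{t}\textsc{t}}(t_0,\delta_2)\bigr) = \bigl(2v_{\mathrm{ast}1} + \tilde{g}_{\textsc{t}\textsc{t}}(t_0,\delta_1)\bigr)\bigl(2v_{\mathrm{ast}2} + \tilde{g}_{\textsc{t}_1\textsc{t}_1}(t_1,\delta_{21})\bigr),
\end{equation}
the two minus signs on the right combining to a plus. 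This is precisely the claimed formula \eqref{astadd1}.

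I expect the only real work to be careful bookkeeping rather than a genuine obstacle. The point requiring attention is that $v_{\mathrm{ast}2}$ and $v_{\mathrm{spec}2}$ must be evaluated in the optical frame of $\beta_1$, not that of $\beta_0$, so the correct affine distance feeding into the $\beta_1$ metric coefficient is $\delta_{21}$ rather than $\delta_2 - \delta_1$. This is exactly where Lemma \ref{deltas} enters: it identifies $\delta_{21} = \frac{a(t_0)}{a(t_1)}(\delta_2 - \delta_1)$, so that $\tilde{g}_{\textsc{t}_1\textsc{t}_1}(t_1,\delta_{21})$ is well-defined and expressible in terms of the central observer's data. One should also check that the three spectroscopic velocities appearing in the multiplicative identity are literally the same quantities computed in Theorem \ref{spectheorem1}, so that its conclusion transplants verbatim; given that both theorems refer to the same configuration of Figure \ref{fig1spec}, this is immediate.
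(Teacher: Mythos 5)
Your proposal is correct and follows essentially the same route as the paper's own proof: both rearrange Eq.~\eqref{vastend2} into the form $2v_{\mathrm{ast}}+\tilde{g}_{\textsc{t}\textsc{t}}=-\frac{1-v_{\mathrm{spec}}}{1+v_{\mathrm{spec}}}$ for each of the three observer pairs (with $\tilde{g}_{\textsc{t}_{1}\textsc{t}_{1}}(t_{1},\delta_{21})$ for the pair $\beta_{1},\beta_{2}$), substitute into the multiplicative form of the spectroscopic addition law of Theorem~\ref{spectheorem1}, and use Lemma~\ref{deltas} to identify $\delta_{21}$. There are no gaps.
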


\begin{proof}
Taking into account Remark \ref{times} and combining Eqs. \eqref{vastend2} and \eqref{special2} (see also Eq. \eqref{special'}) gives,

\begin{equation}
\label{vastend3}
v_{\mathrm{ast}3}=-\frac{1}{2}\left(\tilde{g}_{\textsc{t}\textsc{t}}(t_{0},\delta_{2})+
\frac{1-v_{\mathrm{spec}1}}{1+v_{\mathrm{spec}1}}\cdot\frac{1-v_{\mathrm{spec}2}}{1+v_{\mathrm{spec}2}}\right). 
\end{equation}
It follows from Eq. \eqref{vastend2} that,
\begin{equation}\label{vastend4}
\frac{1-v_{\mathrm{spec}1}}{1+v_{\mathrm{spec}1}}=-2v_{\mathrm{ast}1}-\tilde{g}_{\textsc{t}\textsc{t}}(t_{0},\delta_{1})
\end{equation}
and
\begin{equation}\label{vastend5}
\frac{1-v_{\mathrm{spec}2}}{1+v_{\mathrm{spec}2}}=-2v_{\mathrm{ast}2}-\tilde{g}_{\textsc{t}_{1}\textsc{t}_{1}}\left(t_{1}, \delta_{21}\right).
\end{equation}
The result now follows by combining Eqs \eqref{vastend3},  \eqref{vastend4}, and  \eqref{vastend5}.
\end{proof}

\begin{example}  For the Milne Universe, it is easily verified from Eq. \eqref{opticalmetric} that $\tilde{g}_{\textsc{t}\textsc{t}}\equiv-1$.  Applying Eq. \eqref{astadd1} then gives the following special relativistic velocity addition formula for the astrometric relative velocity, for the configuration depicted in Fig.\ref{fig1spec},
\begin{equation}\label{milneastroadd}
v_{\text{ast}_3}=v_{\text{ast}_1}+v_{\text{ast}_2}-2v_{\text{ast}_1}v_{\text{ast}_2}.
\end{equation}
We note, however, that different velocity addition formulas for the astrometric relative velocity hold for configurations different from that depicted in Fig.\ref{fig1spec}.
\end{example}

\section{Hubble Additivity}\label{hubble}

\noindent In this section, we consider Robertson-Walker spacetimes for which the geometrically defined velocities of comoving test particles can be expressed as functions of the Hubble velocity.  Examples of such spacetimes are given in the following section.  In this situation, the addition formulas for the geometrically defined velocities may take simpler forms than for more general scenarios.\\ 

\noindent The usual foliation of a Robertson-Walker spacetime by maximally symmetric space slices $\{\Sigma_{t}\}$, parameterized by cosmological time $t$, determines a notion of simultaneity different from spacelike or lightlike simultaneity and naturally leads to the Hubble velocity and Hubble's law,

\begin{equation}\label{introhubble}
v\equiv\dot{d}(t)=\dot{a}(t)\chi=Hd.
\end{equation}
Here $v$ is Hubble velocity, $H$ is the Hubble parameter, $d=a(t)\chi$ is the proper distance on $\Sigma_{t}$ from the observer to the comoving test particle with coordinate $\chi$, and the overdot signifies differentiation with respect to $t$.\\ 

 \noindent The velocity addition formula for the Hubble velocity of comoving objects, relative to a comoving observer is particularly simple.  Following the notation of the previous sections, let the secondary oberver, $\beta_1$, whose fixed spatial coordinate is $\chi_1$, have Hubble speed $v_{1}$ relative to $\beta_{0}$. The Hubble speed of $\beta_2$, at $\chi_2$, relative to $\beta_{0}$ is denoted by $v_{3}$, and $v_{2}$ denotes the Hubble speed of $\beta_2$ relative to $\beta_1$.  Then at cosmological time $\tau$,
 
 \begin{equation}\label{hubbleadd}
v_3 = \dot{a}(\tau)\chi_2 = \dot{a}(\tau)[\chi_1 +(\chi_2-\chi_1)]= v_1 + v_2.
\end{equation}

\noindent We see that the velocity addition formula for the Hubble velocity is Galilean.\\ 

\noindent In general (see \cite{Bolos12}), all four of the geometrically defined relative velocities  of a comoving test particle are uniquely determined by the two coordinates, $(\tau, \chi)$, where $\chi$ is the fixed coordinate of the comoving test particle, and $\tau$ is the proper time of the central observer.  However, in some cases, the dependence of the relative velocities is exclusively through the Hubble speed parameter $v$,
\begin{equation}
\label{hubblespeed}
v=v(\tau ,\chi )\equiv\dot{a}(\tau )\chi.
\end{equation} 
\begin{remark}
\label{remhubblespeed}
The expression \eqref{hubblespeed} for $v$ is the Hubble speed of a comoving test particle with curvature-normalized coordinates $(\tau ,\chi )$.  However, it is important to recognize that the relative velocities that we calculate in this section, as functions of $v$, are those of test particles located at different spacetime points, of the form $(t,\chi)$, where $t<\tau$. 
\end{remark}

\noindent If there is a one-to-one correspondence between Hubble velocities to geometric velocities (Fermi, kinematic, spectroscopic, or astrometric), then an addition formula for the geometrically defined relative velocities may be derived from Eq.\eqref{hubbleadd} as the following theorem shows.

\begin{theorem}\label{hubaddth}
Let $v_s$ denote one of the four geometrically defined relative velocities considered in this paper, and let $v$ denote the Hubble velocity of a test particle. Suppose further that $v_s$ has a 1-1 functional dependence on $v$, $v_s=f(v)$. Then the geometric velocity, $v_{s3}$, of $\beta_2$ relative to $\beta_0$ at proper time $\tau_0^+$ of $\beta_0$ can be written in terms of $v_{s1}$ and $v_{2s}$, the geometric relative velocities of $\beta_1$ relative to $\beta_0$ and $\beta_2$ relative to $\beta_1$, respectively. Here $v_{s1}$ is the relative velocity of $\beta_1$ at proper time $\tau_0^-$ (it is possible for $\tau_0^-=\tau_0^+$) of $\beta_0$ and $v_{s2}$ is the relative velocity of $\beta_2$ at proper time $\tau_1$ of $\beta_1$. The relationship is
\begin{equation}\label{hubadd}
v_{s3}=f\left(\frac{\dot{a}(\tau_0^+)}{\dot{a}(\tau_0^-)}f^{-1}(v_{s1})+\frac{\dot{a}(\tau_0^+)}{\dot{a}(\tau_1)}f^{-1}(v_{s2})\right)
\end{equation}
\end{theorem}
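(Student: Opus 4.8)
The plan is to exploit the hypothesis $v_s = f(v)$, which by assumption makes each geometric relative velocity a one-to-one function of the corresponding Hubble velocity $v = \dot{a}(\tau)\chi$. Since the Hubble velocity is \emph{linear} in the fixed spatial coordinate $\chi$, and the spatial coordinates of the three comoving worldlines satisfy the trivial identity $\chi_2 = \chi_1 + (\chi_2 - \chi_1)$, the entire content of the theorem reduces to inverting $f$, rewriting each Hubble velocity as $\dot{a}(\tau)$ times the relevant coordinate, and reassembling.

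First I would apply $f^{-1}$ to each of the three geometric velocities. Using the defining relation $v_{s} = f(\dot{a}(\tau)\chi)$ together with the adaptation of the relative-velocity formulas to a non-central comoving observer (Remark \ref{spaceobservers}), I would record the three identities
\begin{equation}
f^{-1}(v_{s1}) = \dot{a}(\tau_0^-)\chi_1, \qquad f^{-1}(v_{s2}) = \dot{a}(\tau_1)(\chi_2 - \chi_1), \qquad f^{-1}(v_{s3}) = \dot{a}(\tau_0^+)\chi_2,
\end{equation}
where the proper times $\tau_0^-$, $\tau_1$, $\tau_0^+$ are exactly those at which $v_{s1}$, $v_{s2}$, $v_{s3}$ are evaluated in the statement. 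Dividing each by the appropriate $\dot{a}$ recovers the coordinates $\chi_1$, $\chi_2 - \chi_1$, $\chi_2$. Substituting into the identity $\chi_2 = \chi_1 + (\chi_2 - \chi_1)$ and multiplying through by $\dot{a}(\tau_0^+)$ gives
\begin{equation}
f^{-1}(v_{s3}) = \frac{\dot{a}(\tau_0^+)}{\dot{a}(\tau_0^-)}f^{-1}(v_{s1}) + \frac{\dot{a}(\tau_0^+)}{\dot{a}(\tau_1)}f^{-1}(v_{s2}).
\end{equation}
Applying $f$ to both sides — legitimate because $f$ is one-to-one and the argument on the right equals $\dot{a}(\tau_0^+)\chi_2$, hence lies in the domain of $f$ — produces Eq. \eqref{hubadd}.

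The calculation is elementary; the only real work is bookkeeping. The point requiring the most care is that the three velocities are evaluated at three \emph{different} proper times, so one cannot invoke the Galilean additivity of Eq. \eqref{hubbleadd} directly, since that identity holds only at a single common cosmological time. The factors $\dot{a}(\tau_0^+)/\dot{a}(\tau_0^-)$ and $\dot{a}(\tau_0^+)/\dot{a}(\tau_1)$ are precisely the corrections that rescale each Hubble velocity from its own evaluation time to the common time $\tau_0^+$ before the coordinates are summed. A secondary check is to confirm that the Hubble velocity of $\beta_2$ relative to $\beta_1$ is $\dot{a}(\tau_1)(\chi_2 - \chi_1)$, obtained by treating $\beta_1$ as the central observer via the $\mathrm{sgn}(\chi - \chi_j)$ adaptation noted in Remark \ref{spaceobservers}.
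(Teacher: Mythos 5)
Your proof is correct and is essentially identical to the paper's: both decompose $\dot{a}(\tau_0^+)\chi_2$ via $\chi_2=\chi_1+(\chi_2-\chi_1)$, rescale each term by the ratios $\dot{a}(\tau_0^+)/\dot{a}(\tau_0^-)$ and $\dot{a}(\tau_0^+)/\dot{a}(\tau_1)$ to express it in terms of the Hubble velocities $v_1$, $v_2$ evaluated at their own proper times, and then invert $f$ term by term before applying $f$ to conclude. The only difference is cosmetic ordering (you apply $f^{-1}$ first and sum afterward, the paper sums the Hubble velocities first), plus your explicit domain remark, which the paper leaves implicit.
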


\begin{proof}
The Hubble velocity, $v_3$, of $\beta_2$ relative to $\beta$ at $\tau_0^+$ can be expressed as,
\begin{align}\label{hubaddp1}
v_3=\dot{a}(\tau_0^+)\chi_2&=\dot{a}(\tau_0^+)\chi_1+\dot{a}(\tau_0^+)(\chi_2-\chi_1)\\
&=\frac{\dot{a}(\tau_0^+)}{\dot{a}(\tau_0^-)}\dot{a}(\tau_0^-)\chi_1+\frac{\dot{a}(\tau_0^+)}{\dot{a}(\tau_1)}\dot{a}(\tau_1)(\chi_2-\chi_1)\nonumber \\
&=\frac{\dot{a}(\tau_0^+)}{\dot{a}(\tau_0^-)}v_1+\frac{\dot{a}(\tau_0^+)}{\dot{a}(\tau_1)}v_2,\nonumber
\end{align}
where $v_1$ is the Hubble velocity of $\beta_1$ relative to $\beta_0$ at proper time $\tau_0^-$ of $\beta_0$ and $v_2$ is the Hubble velocity of $\beta_2$ relative to $\beta_1$ at proper time $\tau_1$ of $\beta_1$. By assumption $v_{s1}=f(v_1)$ and $v_{s2}=f(v_2)$. Since $f$ is 1-1, we may write $v_1=f^{-1}(v_{s1})$ and $v_2=f^{-1}(v_{s2})$. Substituting these expressions into Eq. \eqref{hubaddp1} we obtain,

\begin{equation}\label{hubaddp2}
v_3=\frac{\dot{a}(\tau_0^+)}{\dot{a}(\tau_0^-)}f^{-1}(v_{s1})+\frac{\dot{a}(\tau_0^+)}{\dot{a}(\tau_1)}f^{-1}(v_{s2}).
\end{equation}

\noindent Then from Eq. \eqref{hubaddp2}, we have that,
\begin{equation}
v_{s3}=f(v_3)=f\left(\frac{\dot{a}(\tau_0^+)}{\dot{a}(\tau_0^-)}f^{-1}(v_{s1})+\frac{\dot{a}(\tau_0^+)}{\dot{a}(\tau_1)}f^{-1}(v_{s2})\right).
\end{equation}
\end{proof}

\noindent While Eq. \eqref{hubadd} is less general than the other formulas developed so far, it is often easier to use when working with a particular spacetime where it is applicable.\\

\noindent We conclude this section with a corollary that gives a kinematic velocity addition formula for Robertson-Walker spacetimes with scale factors following power laws, i.e., scale factors of the form,

\begin{equation}
a(t)=t^{\alpha},
\end{equation}
where $\alpha >0$, $\alpha\neq 1$\footnote{The case $\alpha=1$ gives the Milne universe}. It was shown in \cite{randles} and \cite{sam}  that the proper radius $\rho_{\mathcal{M}_{\tau}}(\alpha)$ of $\mathcal{M}_{\tau}$ is an increasing function of $\tau$ and,
\begin{equation}
\label{calpha2}
C_{\alpha }\equiv\frac{\rho_{\mathcal{M}_{\tau}}(\alpha)}{\tau} =\frac{\sqrt{\pi }\,\Gamma \left( \frac{1+\alpha }{2\alpha }\right) }{\Gamma \left( \frac{1}{2\alpha }\right)}.
\end{equation}
 Let $F_{\alpha }(z) \equiv z^{\frac{1-\alpha }{\alpha }} \, _2F_1\left( \frac{1}{2},\frac{1-\alpha }{2\alpha };\frac{1+\alpha }{2\alpha };z^2\right)$
where $0<z<1$, and where $ _2F_1(\cdot ,\cdot ;\cdot ;\cdot )$ is the Gauss hypergeometric function.  Define $G_{\alpha }(v)\equiv\left( F_{\alpha }^{-1}\left( C_{\alpha }+\frac{\alpha -1}{\alpha}v\right) \right)^{1/\alpha }$.  Specializing to the case $\chi>0$ (for convenience only), for a comoving test particle (see \cite{sam}) then,
\begin{equation}
\label{powerVkinetic}
 v_{\mathrm{kin}} = f(v)= \sqrt{1-G_{\alpha }^{2\alpha }(v)},
\end{equation}
where the function $f$ plays the same role here as in Theorem \ref{hubaddth}.  For $0<\alpha<1$, $v$ is bounded above by $\frac{\alpha}{1-\alpha}C_{\alpha }$, but for $\alpha>1$, $\chi $ and $v$ have no upper bounds \cite{sam}. It is readily seen that,

\begin{equation}\label{-1}
v=f^{-1}(v_{\mathrm{kin}})=\frac{\alpha}{\alpha-1}\left[F_{\alpha}\left(\sqrt{1-v_{\text{kin}}^2}\right)-C_{\alpha} \right].
\end{equation}

\begin{corollary}\label{powerkinadd}
Let the scale factor for Robertson-Walker spacetime be given by $a(t)=t^{\alpha}$ with  $\alpha >0$, $\alpha\neq 1$. For the scenario depicted in Figure \ref{fc1}, with $\chi_2>\chi_1>0$, the kinematic velocity addition formula is given by,
\begin{equation}\label{poweradd}
v_{\text{kin}3}= \sqrt{1-G^{2\alpha}_{\alpha}\left(f^{-1}(v_{\text{kin}1})+(1-v_{\text{kin}1}^2)^{\frac{\alpha-1}{2\alpha}}f^{-1}(v_{\text{kin}2})\right)}.
\end{equation}
wherein we refer to Eq. \eqref{-1}.
\end{corollary}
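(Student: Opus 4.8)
The plan is to obtain \eqref{poweradd} as a direct specialization of the Hubble-additivity theorem, Theorem \ref{hubaddth}, to the power-law scale factor $a(t)=t^{\alpha}$ in the configuration of Figure \ref{fc1}. First I would verify that theorem's hypothesis: by \eqref{powerVkinetic}, for $\chi>0$ the kinematic speed is $v_{\mathrm{kin}}=f(v)=\sqrt{1-G_{\alpha}^{2\alpha}(v)}$, a function of the Hubble speed $v=\dot a(\tau)\chi$ alone, and it is 1-1 with explicit inverse \eqref{-1}; invertibility over the admissible range of $v$ (bounded by $\tfrac{\alpha}{1-\alpha}C_{\alpha}$ when $0<\alpha<1$, unbounded when $\alpha>1$) is exactly what \cite{sam} provides. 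The hypothesis $\chi_{2}>\chi_{1}>0$ makes all sign factors $+1$, so this single branch of $f$ governs every relative velocity in play (the same $f$ applying to velocities measured by $\beta_{1}$ by homogeneity). Thus Theorem \ref{hubaddth} applies with $v_{s}=v_{\mathrm{kin}}$, and the work reduces to evaluating the two frequency ratios $\dot a(\tau_{0}^{+})/\dot a(\tau_{0}^{-})$ and $\dot a(\tau_{0}^{+})/\dot a(\tau_{1})$ in \eqref{hubadd}.

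Next I would read the relevant proper times off Scenario I. Since $v_{\mathrm{kin}3}$ and $v_{\mathrm{kin}1}$ are both measured by $\beta_{0}$ at the single proper time $\tau_{0}$ --- in Figure \ref{fc1} the time discrepancy sits on the worldline of the test particle, not on $\beta_{0}$ --- we have $\tau_{0}^{+}=\tau_{0}^{-}=\tau_{0}$, so the first ratio equals $1$ and drops out. Formula \eqref{hubadd} then collapses to
\[
v_{\mathrm{kin}3}=f\!\left(f^{-1}(v_{\mathrm{kin}1})+\frac{\dot a(\tau_{0})}{\dot a(\tau_{1})}\,f^{-1}(v_{\mathrm{kin}2})\right),
\]
where $\tau_{1}$ is the cosmological time of the event $\Psi_{0}\cap\beta_{1}=(\tau_{1},\chi_{1})$ at which $\beta_{1}$ registers $v_{\mathrm{kin}2}$.

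It then remains to rewrite the surviving ratio $\dot a(\tau_{0})/\dot a(\tau_{1})$ through $v_{\mathrm{kin}1}$, using only the power law and \eqref{eqvkin}. Because $(\tau_{1},\chi_{1})$ lies on the Fermi slice $\mathcal{M}_{\tau_{0}}$, the first relation of \eqref{3kin123} gives $v_{\mathrm{kin}1}^{2}=1-a^{2}(\tau_{1})/a^{2}(\tau_{0})=1-(\tau_{1}/\tau_{0})^{2\alpha}$, hence $\tau_{1}/\tau_{0}=(1-v_{\mathrm{kin}1}^{2})^{1/(2\alpha)}$. With $\dot a(t)=\alpha t^{\alpha-1}$ this yields
\[
\frac{\dot a(\tau_{0})}{\dot a(\tau_{1})}=\left(\frac{\tau_{0}}{\tau_{1}}\right)^{\alpha-1}=(1-v_{\mathrm{kin}1}^{2})^{\frac{1-\alpha}{2\alpha}},
\]
which is the coefficient multiplying $f^{-1}(v_{\mathrm{kin}2})$ in \eqref{poweradd}; substituting it together with $f(v)=\sqrt{1-G_{\alpha}^{2\alpha}(v)}$ back into the reduced identity produces \eqref{poweradd}.

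The invocation of Theorem \ref{hubaddth} and the elementary power-law algebra are routine; the hard part is the geometric bookkeeping. One must recognize that Scenario I is exactly the degenerate case $\tau_{0}^{+}=\tau_{0}^{-}$ permitted in \eqref{hubadd}, so that only one nontrivial frequency ratio survives, and one must track the three distinct reference times entering the Hubble speeds $\dot a(\tau)\chi$ ($\tau_{0}$ for $v_{\mathrm{kin}1}$ and $v_{\mathrm{kin}3}$, $\tau_{1}$ for $v_{\mathrm{kin}2}$) so that the single ratio collapses to a function of $v_{\mathrm{kin}1}$ alone. I would also run a sign check on the exponent --- e.g. that $\tau_{1}<\tau_{0}$ forces $\dot a(\tau_{0})/\dot a(\tau_{1})>1$ exactly when $\alpha>1$ --- since that power of $(1-v_{\mathrm{kin}1}^{2})$ is the one place a slip is easy to make.
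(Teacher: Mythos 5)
Your proposal is correct and is essentially the paper's own proof: specialize Theorem \ref{hubaddth} to the configuration of Figure \ref{fc1}, where $\tau_0^+=\tau_0^-=\tau_0$ so the first frequency ratio drops out, then express the surviving ratio $\dot a(\tau_0)/\dot a(\tau_1)$ through $v_{\mathrm{kin}1}$ via Eq. \eqref{eqvkin} and the power law, and substitute into $f$ from Eq. \eqref{powerVkinetic} with inverse \eqref{-1}. One substantive point, and it is to your credit: you derive the coefficient $(1-v_{\mathrm{kin}1}^2)^{\frac{1-\alpha}{2\alpha}}$, whereas Eq. \eqref{poweradd} as printed carries the exponent $\frac{\alpha-1}{2\alpha}$; these are reciprocals, and yours is the right one. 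The paper's own proof computes $\dot a(\tau_1)/\dot a(\tau_0)=(1-v_{\mathrm{kin}1}^2)^{\frac{\alpha-1}{2\alpha}}$ in Eq. \eqref{dot}, but the specialization of Eq. \eqref{hubadd} to this scenario requires the reciprocal ratio $\dot a(\tau_0)/\dot a(\tau_1)$, so the displayed corollary contains a sign slip in the exponent. Your sanity check ($\tau_1<\tau_0$ forces $\dot a(\tau_0)/\dot a(\tau_1)>1$ exactly when $\alpha>1$) catches this, and the paper's later examples corroborate you: for $\alpha=1/3$ the coefficient is $1-v_1^2=(1-v_{\mathrm{kin}1}^2)^{\frac{1-\alpha}{2\alpha}}$ (Eq. \eqref{a13d}), and for $\alpha=1/2$ it is $\cos v_1=(1-v_{\mathrm{kin}1}^2)^{1/2}$, again matching your exponent rather than the printed one. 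The only blemish in your write-up is the assertion that your coefficient ``is the coefficient multiplying $f^{-1}(v_{\mathrm{kin}2})$ in \eqref{poweradd}''; as printed it is its reciprocal, so you should say explicitly that you are correcting a typo in the statement rather than matching the displayed formula.
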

\begin{proof}
From \eqref{eqvkin} we obtain,
\begin{equation}
\frac{\tau_1^{\alpha}}{\tau_{0}^{\alpha}}=\frac{a(\tau_1)}{a(\tau_{0})}=\sqrt{1-v_{\text{kin}1}^2},
\end{equation}
and thus,
\begin{equation}\label{dot}
\frac{\dot{a}(\tau_1)}{\dot{a}(\tau_{0})}=\frac{\tau_1^{\alpha-1}}{\tau_{0}^{\alpha-1}}=\left(1-v_{\text{kin}1}^{2}\right)^{\frac{\alpha-1}{2\alpha}}.
\end{equation}
The specialization of Eq. \eqref{hubadd} to the kinematic relative velocity in the configuration depicted by Figure \ref{fc1} is, 

\begin{equation}
v_{\text{kin}3}=f\left(f^{-1}(v_{\text{kin}1})+\frac{\dot{a}(\tau_0)}{\dot{a}(\tau_1)}f^{-1}(v_{\text{kin}2})\right)
\end{equation} 
Combining this with Eqs.\eqref{powerVkinetic} and \eqref{dot} then yields Eq. \eqref{poweradd}. 
\end{proof}
\noindent We mention that the cases of negative $\chi$  values, and/or $\chi_1>\chi_2$ are also easily handled at the expense of some absolute values, and that similar results for the cases of Figures \ref{fc2} and \ref{fc3} may be obtained using the same method.

\begin{remark}
The velocity addition formula given by Corollary \ref{powerkinadd} is more specialized than the formula given in Theorem \ref{spacelike3}, but, as in special relativity, has the feature that there is no explicit time coordinate dependence, unlike the more general velocity addition formula given in Theorem \ref{spacelike3}.
\end{remark}

\section{Examples}\label{examples}

In this section we apply the results of the previous sections to find explicit expressions for the velocity addition formulas of comoving observers in particular Robertson-Walker spacetimes.

\subsection{The de Sitter universe}
The Hubble velocity for a comoving test particle in the de Sitter universe is given by,

\begin{equation}
\label{eqvHdsu}
v=v(\tau ,\chi ) =H_{0} e^{H_{0} \tau }\chi.
\end{equation}
where $H_{0}$ is the Hubble constant. Expressed in the notation of Theorem \ref{hubaddth}, it was shown in \cite{Bolos12} that the kinematic velocity of a test particle relative to a central observer is given by 
\begin{equation}\label{fdesitter}
v_{\mathrm{kin}}=f(v)=\frac{v}{\sqrt{1+v^2}},
\end{equation}
for $|v(\tau ,\cdot )|<\sqrt{e^{2H_0 \tau }-1}$ and $|v(\cdot ,\chi )|>\frac{H_0\chi }{\sqrt{1-\left( H_0 \chi \right) ^2}}$.  Then,
\begin{equation}\label{dsvkinv}
v=f^{-1}(v_{\mathrm{kin}})=\frac{v_{\mathrm{kin}}}{\sqrt{1-v_{\mathrm{kin}}^2}}.
\end{equation}
\noindent We use Theorem \ref{hubaddth} to find a velocity addition formula for the kinematic relative velocity in the scenarios depicted in Figs. \ref{fc1} and \ref{fc2}.  Applying Eq. \eqref{hubaddp1} to the scenario depicted in Fig. \ref{fc1} we have that
\begin{equation}\label{dsvk1}
v_3=H_0e^{H_0\tau_0}\chi_2=v_1+\frac{\dot{a}(\tau_0)}{\dot{a}(\tau_1)}v_2.
\end{equation} 

\noindent Using \eqref{eqvkin} we have,

\begin{equation}\label{dsvk2}
\frac{\dot{a}(\tau_0)}{\dot{a}(\tau_1)}=\frac{e^{H_0\tau_0}}{e^{H_0\tau_1}}=\frac{a(\tau_0)}{a(\tau_1)}=\frac{1}{\sqrt{1-v_{\text{kin}1}^2}}.
\end{equation}

\noindent Combining Eqs. \eqref{dsvk1} and \eqref{dsvk2} and substituting them into the formula $v_{\text{kin}3}=\frac{v_3}{\sqrt{1+v_3^2}}$ along with Eq. \eqref{dsvkinv} yields the addition formula,

\begin{equation}
v_{\text{kin}3}=\frac{v_{\text{kin}1}\sqrt{1-v_{\text{kin}2}^2}+v_{\text{kin}2}}{\sqrt{1+2v_{\text{kin}1}v_{\text{kin}2}\sqrt{1-v_{\text{kin}2}^2}}}.
\end{equation}

\noindent Turning to the scenario depicted in Fig. \ref{fc2}, we have by Eq. \eqref{hubaddp1} that

\begin{equation}\label{dsvk3}
v_3=\frac{\dot{a}(t^+)}{\dot{a}(t^-)}v_1+\frac{\dot{a}(t^+)}{\dot{a}(\tau_1)}v_2.
\end{equation}

\noindent Again, using \eqref{eqvkin} we have,

\begin{align}
\label{dsvk4}\frac{\dot{a}(\tau_2)}{\dot{a}(t^{+})}&=\frac{e^{H_0\tau_2}}{e^{H_0t^{+}}}=\frac{a(\tau_2)}{a(t^{+})}=\sqrt{1-v_{\text{kin}3}^2}\\
\label{dsvk5}\frac{\dot{a}(\tau_2)}{\dot{a}(\tau_{1})}&=\frac{e^{H_0\tau_2}}{e^{H_0\tau_{1}}}=\frac{a(\tau_2)}{a(\tau_{1})}=\sqrt{1-v_{\text{kin}2}^2}\\
\label{dsvk6}\frac{\dot{a}(\tau_1)}{\dot{a}(t^{-})}&=\frac{e^{H_0\tau_1}}{e^{H_0t^{-}}}=\frac{a(\tau_1)}{a(t^{-})}=\sqrt{1-v_{\text{kin}1}^2}.
\end{align}
Substituting Eqs. \eqref{dsvk4}, \eqref{dsvk5}, \eqref{dsvk6} into Eq. \eqref{dsvk3} and then using Eq. \eqref{dsvkinv} gives the velocity addition formula,

\begin{equation}
v_{\mathrm{kin}3}=v_{\mathrm{kin}1}\sqrt{1-v_{\mathrm{kin}2}^2}+v_{\mathrm{kin}2}.
\end{equation}

\subsection{Addition Formulas for $a(t)=t^{1/3}$}

For this scale factor,

\begin{equation}
v=\frac{\chi}{3\tau^{2/3}},
\end{equation}
and it was shown in \cite{Bolos12} that, 

\begin{equation}\label{a13kin}
v_{\mathrm{kin}}=v
\end{equation}

\noindent and 

\begin{equation}\label{a13fermi}
v_{\mathrm{Fermi}}=v(1+v^2),
\end{equation}

\noindent with $|v|<1$. Since the function defined by Eq. \eqref{a13fermi} is 1-1, the Hubble velocity can be expressed in terms of the Fermi relative velocity as
\begin{equation}\label{a13vfinv}
v(v_\mathrm{Fermi})=\frac{1}{C(v_\mathrm{Fermi})}-\frac{1}{3}C(v_\mathrm{Fermi}),
\end{equation}

\noindent where

\begin{equation}
C(v_\mathrm{Fermi})=\left(\frac{-27v_\mathrm{Fermi}+\sqrt{27(4+27v_{\mathrm{Fermi}}^2)}}{2}\right)^{1/3}.
\end{equation}

\noindent We apply Theorem \ref{hubaddth} to the scenario depicted in Fig. \ref{fc1} to obtain addition formulas for the Fermi and kinematic relative velocities. From Eqs. \eqref{eqvkin},
\begin{equation}
\left(\frac{\tau_1}{\tau_{0}}\right)^{1/3}=\frac{a(\tau_1)}{a(\tau_{0})}=\sqrt{1-v_{\text{kin}1}^2}.
\end{equation}
Combining this with Eq. \eqref{a13kin} gives,
\begin{equation}
\tau_1=(1-v_1^2)^{3/2}\tau_0.
\end{equation}

\noindent Eq. \eqref{hubaddp1} for this situation therefore reads,

\begin{equation}\label{a13d}
v_3=v_1+\frac{\dot{a}(\tau_0)}{\dot{a}(\tau_1)}v_2=v_1+(1-v_1^2)v_2.
\end{equation}

\noindent For the Fermi relative velocity it follows from Eq. \eqref{a13d} that

\begin{equation}
v_{\mathrm{Fermi}_3}=\big[v(v_{\mathrm{F}1})+(1-v(v_{\mathrm{F}1})^2)v(v_{\mathrm{F}2})\big] \big[1+\big(v(v_{\mathrm{F}1})+(1-v(v_{\mathrm{F}1})^2)v(v_{\mathrm{F}2})\big)^2\big],
\end{equation}

\noindent where $v(v_{\mathrm{F}j})\equiv v(v_{\text{Fermi}j})$ is given by Eq. \eqref{a13vfinv} for $j=1,2$.\newline

\noindent For the kinematic relative velocity, we obtain the velocity addition formula by use of Eq. \eqref{a13kin} in Eq. \eqref{a13d}. The result is, 

\begin{equation}
v_{\text{kin}3}=v_{\text{kin}1}+(1-v_{\text{kin}1}^2)v_{\text{kin}2}.
\end{equation}

\subsection{Addition Formulas for $a(t)=t^{1/2}$}
For the scale factor $a(t)=t^{1/2}$ for the radiation dominated universe, the Hubble speed is given by,

\begin{equation}
v=\frac{\chi}{2\sqrt{\tau}},
\end{equation}
and it was shown in $\cite{Bolos12}$, that the kinematic and Fermi relative velocities of a comoving test particle are given by,

\begin{equation}\label{sinv}
v_\text{kin}=\sin v
\end{equation}
and

\begin{equation}\label{a12vf}
v_\text{Fermi}=(\cos v+v\sin v)\sin v,
\end{equation}

\noindent where $|v|<\pi/2$. From Eqs. \eqref{eqvkin}, for the scenario depicted in Fig. \ref{fc1},
\begin{equation}
\left(\frac{\tau_1}{\tau_{0}}\right)^{1/2}=\frac{a(\tau_1)}{a(\tau_{0})}=\sqrt{1-v_{\text{kin}1}^2}.
\end{equation}
Combining this with Eq. \eqref{sinv} gives,

\begin{equation}
\tau_1=\tau_0\cos^2v_1.
\end{equation}
Therefore,

\begin{equation}
v_3=v_1+\frac{\dot{a}(\tau_0)}{\dot{a}(\tau_1)}v_2=v_1+v_2\cos v_1
\end{equation}

\noindent For the kinematic relative velocity addition formula, we then have that

\begin{align}
v_{\text{kin}_3}&=\sin v_3=\sin(v_1+v_2\cos v_1)\nonumber\\
&=\sin(v_1)\cos(v_2\cos v_1)+\cos(v_1)\sin(v_2\cos v_1)\nonumber\\
&=v_{\text{kin}_1}\cos\left(\sin^{-1}(v_{\text{kin}_2})\sqrt{1-v_{\text{kin}_1}^2}\right)\\
&+\sqrt{1-v_{\text{kin}_1}^2}\sin\left(\sin^{-1}(v_{\text{kin}_2})\sqrt{1-v_{\text{kin}_1}^2}\right).\nonumber
\end{align}

\noindent For the Fermi relative velocity, since the function defined by Eq. \eqref{a12vf} is 1-1 on the interval $(-\pi/2,\pi/2)$, in principle an inverse function can be found that gives the Hubble velocity as a function of the Fermi relative velocity. By Theorem \ref{hubaddth} we can then find an addition formula for the scale factor $a(t)=t^{1/2}$.\\

\noindent Using the results of $\cite{Bolos12, sam}$, general velocity addition formulas can be similarly formulated for scale factors of the form, $a(t)=t^{\alpha}$ for $\alpha >0$.

\section{Conclusion}\label{conclude}

Natural generalizations of the the special relativistic velocity addition formula arise from the consideration of comoving observers and particles in Robertson-Walker spacetimes.  As described in the introduction, such a configuration is the natural generalization of special relativistic inertial frames in standard configuration, related by Lorentz boosts.\\

\noindent The introduction of geometrically defined relative velocities, Fermi, kinematic, spectroscopic, and astrometric came about following discussions about the need for a strict definition of \textquotedblleft radial
velocity\textquotedblright\ at the General Assembly of
the International Astronomical Union (IAU), held in 2000 (see, e.g.,
\cite{Bolos02,Soff03,Lind03}).  Given the observation of the preceding paragraph, it is natural to investigate addition formulas for these geometric relative velocities.\\

\noindent General velocity addition formulas for the Fermi and kinematic relative velocities, associated with spacelike simultaneity were found in Section \ref{spacelikesection} and the analogs for the spectroscopic and astrometric relative velocities associated with lightlike simultaneity were developed in Section \ref{lightadd}.  In this degree of generality, with the important exception of the spectroscopic relative velocity (see Theorems \ref{spectheorem1} and \ref{spectheorem2}), the velocity addition formulas depend, not only on the space coordinate $\chi$ (in curvature coordinates) of the comoving observers and particles, but also on their time coordinates.\\

\noindent However, if a geometrically defined relative velocities is a bijective function of the Hubble velocity, in the sense described in Section \ref{hubble}, then the dependence on time coordinates in the addition formulas can, at least in some cases, be eliminated.  This is carried out for the de Sitter universe and general power law cosmologies with scale factors of the form $a(t)=t^{\alpha}, \alpha>0$ ($\alpha = 1$ gives the Milne universe, i.e., special relativity for which the usual velocity addition formula is recovered by our methods). Simple examples, including the radiation dominated universe, are given in Section \ref{examples}.\\ 

\noindent When $\alpha>1$, the Robertson-Walker cosmologies with scale factor $a(t)=t^{\alpha}$ include event horizons.  Fermi and optical coordinates cannot be extended beyond the event horizon \cite{klein13}, so the geometric relative velocities of test particles in that region of spacetime are undefined. By contrast, Hubble velocity fields beyond the event horizon have been studied \cite{johnston} and are of interest as part of the large scale structure of the universe.  A possible future direction for research could be to investigate whether the velocity addition laws in this paper could be used to extend the definitions of geometric relative velocities through the use of an intermediary observer, within the central observers event horizon, in a way that is well defined, and for more realistic spacetimes.  \\ 

\noindent \textbf{Acknowledgment.} J. Reschke was partially supported during the course of this research by the Interdisciplinary Research Institute for the Sciences at California State University, Northridge.


\begin{thebibliography}{99}

\bibitem{bolos} V. J. Bol\'os. Intrinsic definitions of ``relative velocity'' in general relativity. \textit{Commun. Math. Phys.} \textbf{273} (2007), 217--236. (arXiv:\href{http://arxiv.org/abs/gr-qc/0506032}{gr-qc/0506032}).

\bibitem{KC10} Klein, D., Collas, P.: Recessional velocities and Hubble's Law in Schwarzschild-de Sitter space  \textit{Phy. Rev. D15}, \textbf{81}, 063518 (2010), (arXiv:\href{http://arxiv.org/abs/1001.1875}{1001.1875})

\bibitem{randles} Klein, D., Randles, E.,  Fermi coordinates, simultaneity, and expanding space in Robertson-Walker cosmologies \textit{Ann. Henri Poincar\'e} \textbf{12} 303--28 (2011) DOI: 10.1007/s00023-011-0080-9. (arXiv:\href{http://arxiv.org/abs/1010.0588}{1010.0588})

\bibitem{Bolos12} V. J. Bol\'os, D. Klein. Relative velocities for radial motion in expanding Robertson-Walker spacetimes. \textit{Gen. Relativ. Gravit.} \textbf{44} (2012), 1361--1391.  (arXiv:\href{http://arxiv.org/abs/1106.3859}{1106.3859}).

\bibitem{sam} Bol\'os, V. J., Havens, S., Klein, D.: Relative velocities, geometry, and expansion of space. In: \textit{Recent Advances in Cosmology.} Nova Science Publishers, Inc. (2013)  (arXiv:\href{http://arxiv.org/abs/1210.3161}{gr-qc/1210.3161}).

\bibitem{klein13} Klein, D. Maximal Fermi charts and geometry of inßationary universes \textit{Ann. Henri Poincar\'e} \textbf{14} 1525 - 1550 (2013) DOI:  10.1007/s00023-012-0227-3, (arXiv:\href{http://arxiv.org/abs/1210.7651}{1210.7651})

\bibitem{Bolos12b} V. J. Bol\'os. A note on the computation of geometrically defined relative velocities, \textit{Gen. Relativ. Gravit.} \textbf{44} (2012), no. 2, 391-400 DOI: 10.1007/s10714-011-1278-3 (arXiv:\href{http://arxiv.org/abs/1109.0131}{1109.0131})

\bibitem{Bolos13} V. J. Bol\'os.: Kinematic relative velocity with respect to stationary observers in Schwarzschild spacetime, \textit{J. Geom. Phys.} \textbf{66} (2013), 18--23
DOI:	10.1016/j.geomphys.2012.12.005 (arXiv:\href{http://arxiv.org/abs/1205.0884}{1205.0884})

\bibitem{Bolos14} V. J. Bol\'os.: An algorithm for computing geometric relative velocities through Fermi and observational coordinates \textit{Gen. Relativ. Gravit.} \textbf{46:1623} (2014),  DOI 10.1007/s10714-013-1623-9 (arXiv:\href{http://arxiv.org/abs/1301.2932}{1301.2932})

\bibitem{KC1}  Klein, D., Collas, P.: General Transformation Formulas for Fermi-Walker Coordinates \textit{Class. Quant. Grav.}  \textbf{25}, 145019 (17pp) DOI:10.1088/0264-9381/25/14/145019, [gr-qc] (arXiv:\href{http://arxiv.org/abs/0712.3838v4}{0712.3838v4}) (2008).

\bibitem{Bolos05} V. J. Bol\'os. Lightlike simultaneity, comoving observers and distances in general relativity. \textit{J. Geom. Phys.} \textbf{56} (2006), 813--829. (arXiv:\href{http://arxiv.org/abs/gr-qc/0501085}{gr-qc/0501085}).

\bibitem{carney}D. Carney and W. Fischler, ÒDecelerating cosmologies are de-scramblers,Ó (arXiv:\href{http://arxiv.org/abs/1310.7592}{1310.7592})  [hep-th].

\bibitem{hobson} M. P. Hobson,  G. P. Efstathiou, and A. N. Lasenby, \textit{General Relativity, an Introduction for Physicists} (Cambridge U. Press, Cambridge, 2006).

\bibitem{CM} Chicone, C., Mashhoon, B.: Explicit Fermi coordinates and tidal dynamics in de Sitter and G\"odel spacetimes \textit{Phys. Rev.} D \textbf{74},  064019 (2006).


\bibitem{KC3}  Klein, D., Collas, P.: Exact Fermi coordinates for a class of spacetimes, \textit{J. Math. Phys.} \textbf{51} 022501(10pp) (2010). (arXiv:\href{http://arxiv.org/abs/0912.2779v1}{math-ph/0912.2779})

\bibitem{Bolos02} V. J. Bol\'os, V. Liern, J. Olivert. Relativistic simultaneity and causality. \textit{Internat. J. Theoret. Phys.} \textbf{41} (2002) 1007--1018. (arXiv:\href{http://arxiv.org/abs/gr-qc/0503034}{gr-qc/0503034}).


\bibitem{Soff03} M. Soffel, \textit{et al}. The IAU 2000 resolutions for astrometry, celestial mechanics and metrology in the relativistic framework: explanatory supplement. \textit{Astron. J.} \textbf{126} (2003), 2687--2706.  (arXiv:\href{http://arxiv.org/abs/astro-ph/0303376}{astro-ph/0303376}).

\bibitem{Lind03} L. Lindegren, D. Dravins. The fundamental definition of `radial velocity'. \textit{Astron. Astrophys.} \textbf{401} (2003), 1185--1202. (arXiv:\href{http://arxiv.org/abs/astro-ph/0302522}{astro-ph/0302522}).

\bibitem{johnston} Johnston, R., \textit{et al}. Reconstructing the velocity field beyond the local universe. \textit{Gen. Relativ. Gravit}. \textbf{46} (2014). DOI:  10.1007/s10714-014-1812-1 (arXiv:\href{http://arxiv.org/abs/1210.1203}{1210.1203})

\end{thebibliography}
\end{document}